\begin{document}

\voffset=0.8cm
\setlength{\textheight}{23.5cm}
%\setlength{\textwidth}{16.4cm}
%\long\def\symbolfootnote[#1]#2{\begingroup%
%\def\thefootnote{\fnsymbol{footnote}}\footnote[#1]{#2}\endgroup} 

\theoremstyle{definition}\newtheorem{defin}{Definition}%[section]
\theoremstyle{definition}\newtheorem{theorem}{Theorem} 
\theoremstyle{definition}\newtheorem{proposition}{Proposition} 
\theoremstyle{definition}\newtheorem{lem}[defin]{Lemma} 

\title{On the hyperbolicity of Maxwell's equations with a local 
constitutive law}
\author{Volker Perlick}
\altaffiliation[Author's address since November 2010: ]{ZARM, 
Univ. Bremen, 28359 Bremen, Germany. Email: perlick@zarm.uni-bremen.de}
\affiliation{
Physics Department, Lancaster University,  Lancaster LA1 4YB, UK \\
and \\ The Cockcroft Institute, Warrington WA4 4AD, UK 
%\\ Email: v.perlick@lancaster.ac.uk
}

%\date{}

%%%%%%%%%%%%%%%%%%%%%%%%%%%%%%%%%%%%%%%%%%%%%%%%%%%%%%%%%%%%%%%%%%%%%%%%

\begin{abstract}
\noindent
  Maxwell's equations are considered in metric-free form, with a local
  but otherwise arbitrary constitutive law. After splitting Maxwell's
  equations into evolution equations and constraints, we derive the 
  characteristic equation and we discuss its properties in detail. We 
  present several results that are relevant for the 
  question of whether the evolution equations are hyperbolic, strongly 
  hyperbolic or symmmetric hyperbolic. In particular, we
  give a convenient characterisation of all constitutive laws for which
  the evolution equations are symmetric hyperbolic. The latter 
  property is sufficient, but not necessary, for well-posedness of 
  the initial-value problem. By way of example, we illustrate our 
  results with the constitutive laws of biisotropic media and
  of Born-Infeld theory.   
\end{abstract}

\pacs{03.50.De 4.20.Cv}

\maketitle

%%%%%%%%%%%%%%%%%%%%%%%%%%%%%%%%%%%%%%%%%%%%%%%%%%%%%%%%%%%%%%%%%%%%%%%%%

\section{Introduction}\label{sec:intro}
\noindent
The fact that Maxwell's equations can be formulated on a bare manifold 
that need not carry a metric or a connection was first observed by Kottler \cite{Kottler1922}
and Cartan \cite{Cartan1986}, and later also by van Dantzig \cite{Dantzig1934} 
and Schroedinger \cite{Schroedinger1950}, p. 24. It plays a central part 
in Post's \cite{Post1962} systematic study of the formal structure of 
electromagnetism. More recently, Hehl and Obukhov \cite{HehlObukhov2003}
have studied the metric-free (or pre-metric) approach to Maxwell's equations and its 
physical implications in great detail. It is the philosophy of Hehl and
Obukhov to consider electromagnetism as more fundamental than gravity.
In their approach, Maxwell's equations are formulated on a bare manifold.
The constitutive law which connects the electromagnetic field strength
with the electromagnetic excitation plays the role of a ``space-time 
relation''. In other words, the space-time geometry, which governs 
gravity, is coded in the constitutive law. By formulating a particular,
very special, constitutive law one recovers a Lorentzian metric (at least
up to a conformal factor) and, thus, the ordinary general-relativistic 
theory of gravity. It is then very natural to speculate that more general 
constitutive laws, which lead to more general geometric structures, could
be considered as more general (hypothetical) theories of gravity. In
particular, such generalised theories of gravity typically predict 
birefringence for light rays in vacuo (i.e., under the influence of
gravity alone.) 

The Hehl-Obukhov approach has some similarities, although 
more in philosophy than in mathematical technicality, with an idea of 
Newman and his collaborators (see, e.g. \cite{FrittelliKozamehNewman1995})
who suggest to view the equation of wave fronts as fundamental for gravity. 
A characteristic feature (and, maybe, a drawback) of both approaches is 
that the Lorentzian metric that is supposed to describe gravity can be 
fixed only up to a conformal factor, i.e., up to a strictly positive but 
otherwise undetermined scalar function.  

The metric-free approach to Maxwell's equations gives a strong
motivation for investigating which constitutive laws are physically
reasonable and which ones are not. Here we want to discuss a criterion
which is not mentioned in the book by Hehl and Obukhov: We want to
characterise constitutive laws that yield a well-posed initial-value
problem. A closely related property is the admittance of wavelike
solutions, in a sense that is made precise in Section \ref{sec:planewave}
below. Our results apply to the case that the constitutive law is
interpreted as a spacetime relation (i.e., as the vacuum constitutive 
law in a generalised spacetime theory), but also to constitutive laws 
in a medium on a standard general-relativistic spacetime.  

In contrast to Hehl and Obukhov, who restrict to local and linear 
constitutive laws throughout, we allow for nonlinear constitutive 
laws. However, we have to maintain the restriction to local 
constitutive laws which excludes, e.g., media 
with memory such as ferromagnets. The latter case would lead to
integro-differential equations, for which an initial-value problem
in the standard sense cannot be formulated, whereas a local constitutive 
law leads to first-order differential equations for appropriately chosen
field components. In Section \ref{sec:maxwell} we derive these 
differential equations and we decompose them into evolution 
equations and constraints. In Sections \ref{sec:planewave} and
\ref{sec:characteristic} we derive and discuss the characteristic 
equation. The real roots of the characteristic equation determine the 
directions into which wavelike solutions can travel and, thereby, the 
``light cones'' of the theory. Non-real roots are associated with
``evanescent modes'', i.e., with exponentially decaying solutions.
In Section \ref{sec:hyperbolicity} we discuss the notions of hyperbolicity,
strong hyperbolicity and symmetric hyperbolicity. Hyperbolicity 
requires that all roots of the characteristic equation are real, i.e.,
that evanescent modes do not occur. Strong hyperbolicity is a 
necessary and sufficient condition for the initial-value problem
to be well-posed. Symmetric hyperbolicity is a 
sufficient but not a necessary condition for the initial-value problem
to be well-posed. Having established these notions, we investigate
some properties of the light cones for the case of hyperbolicity 
in Section \ref{sec:cone}. In Section
\ref{sec:trafo} we prove that the light cones are coordinate invariant
which is not obvious from our derivation. The following three sections
present some results that are useful for calculations: In Section 
\ref{sec:roots} we discuss how the roots of the characteristic equation
can actually be determined; in Section \ref{sec:SL2R} we demonstrate
the invariance of the characteristic equation under certain changes of
the constitutive law; and in Section \ref{sec:reduction} we derive an alternative
form of the characteristic equation. The case that the light cones are invariant
under temporal or spatial inversion is considered in Section \ref{sec:time},
and the case that there is no birefringence is considered in 
Section \ref{sec:non-birefringence}. In Section \ref{sec:symhyp} we
characterise the class of all constitutive laws for which the evolution 
equations are
symmetric hyperbolic. Finally, two examples are worked out in Section \ref{sec:ex}: biisotropic media and Born-Infeld electrodynamics. In
the conclusions we summarise the results that have been achieved so
far, and we list some important questions that are still open.

% Parker \cite{Parker2004}

% O'Dell \cite{ODell1970}

%----------------------------------------------------------------------------
\section{Maxwell's equations in metric-free form}\label{sec:maxwell}
\noindent
We consider a 4-dimenional bare manifold, with coordinates $x = 
(x^0, x^1, x^2, x^3)$. We use Einstein's summation convention for 
latin indices running from 0 to 3 and for greek indices running from 
1 to 3. We refer to $x^0$ as to the time coordinate and to $x^{\mu}$
as to the spatial coordinates. At present, this is just a convenient 
mode of expression. As we have no structure on our manifold, it does
not make sense to ask whether the $x^0-$lines are timelike or whether
the hypersurfaces $x^0= \mathrm{constant}$ are spacelike. Later, 
however, we will discuss the question of whether initial values for
the evolution part of Maxwell's equations on the hypersurfaces 
$x^0= \mathrm{constant}$ determine a unique solution on an appropriate 
neighborhood. If this is true, one might view the covector $dx^0$ as 
``timelike'', in a sense determined by the evolution equations and not 
by a background structure. The covectors which are timelike in this 
sense turn out to form an open convex cone at each point of the manifold, 
see Section \ref{sec:cone} below. 

In standard index notation, Maxwell's equations read
\begin{equation}\label{eq:maxwell}
  \partial _{[a} F_{bc]} (x) = M_{abc} (x) \; , \qquad
  \partial _{[a} H_{bc]} (x) = J_{abc} (x) \; ,
\end{equation}
where the square bracket denotes antisymmetrization. Here
$F_{ab}=-F_{ba}$ is the electromagnetic field strength, 
$H_{ab}= -H_{ba}$ is the electromagnetic excitation, 
$J_{abc}$ is the electric current and $M_{abc}$ is a 
hypothetical magnetic current. (On physical grounds, there is good 
reason to assume that the latter is zero; however, we take it into
account for the sake of generality.) Under coordinate transformations, 
these fields change according to
\begin{equation}\label{eq:even}
\tilde{F} _{ab} \, = \, 
\frac{\partial x^d}{\partial \tilde{x}{}^a} \, 
\frac{\partial x^e}{\partial \tilde{x}{}^b} \,
F_{de} \, , \qquad
\tilde{M} _{abc} \, = \, 
\frac{\partial x^d}{\partial \tilde{x}{}^a} \, 
\frac{\partial x^e}{\partial \tilde{x}{}^b} \,
\frac{\partial x^f}{\partial \tilde{x}{}^c} \,
M_{def} \, , 
\end{equation}
\begin{equation}\label{eq:odd}
\tilde{H} _{ab} \, = \, 
\frac{\mathrm{det} \big( \frac{\partial x}{\partial \tilde{x}} \big)}{
\big| \mathrm{det} \big( \frac{\partial x}{\partial \tilde{x}} \big) \big|} \,
\frac{\partial x^d}{\partial \tilde{x}{}^a} \, 
\frac{\partial x^e}{\partial \tilde{x}{}^b} \,
H_{de} \, , \qquad
\tilde{J} _{abc} \, = \, 
\frac{\mathrm{det} \big( \frac{\partial x}{\partial \tilde{x}} \big)}{
\big| \mathrm{det} \big( \frac{\partial x}{\partial \tilde{x}} \big) \big|} \,
\frac{\partial x^d}{\partial \tilde{x}{}^a} \, 
\frac{\partial x^e}{\partial \tilde{x}{}^b} \,
\frac{\partial x^f}{\partial \tilde{x}{}^c} \,
J_{def} \, . 
\end{equation}
Thus, in the terminology of de Rham \cite{Rham1984}, $F_{ab}$ and 
$M_{abc}$ are even differential forms whereas $H_{ab}$ and  $J_{abc}$ 
are odd differential forms. (An even differential form is the 
same as a totally antisymmetric covariant tensor field whereas an
odd differential form is the same as a totally antisymmetric covariant
pseudotensor field.) 

We assume that the electric current $J_{abc}$ and the magnetic 
current $M_{abc}$ are given by equations of the form
\begin{equation}\label{eq:currents}
  J_{abc} (x) = j_{abc} \big( x, F(x), H(x) \big) \; , \qquad
  M_{abc} (x) = m_{abc} \big( x, F(x), H(x) \big) \; .
\end{equation}
Here it is essential that the values of the currents at $x$ depend 
on $x$ and on the values of field strength and excitation
at $x$, but not on their derivatives. We will soon see that, 
under this assumption, the currents are irrelevant for the 
question of whether the initial value problem is well-posed.
Therefore, they will play no role in our further discussion;
we have allowed for non-zero currents only for the sake of
generality. Note, however, that the functions $j_{abc}$ and
$m_{abc}$ in (\ref{eq:currents}) are not completely arbitrary.
They must be consistent with the conservation laws
\begin{equation}\label{eq:jmcons}
  \partial _{[a} J_{bcd]} (x) \, = \, 0 \; , \qquad
  \partial _{[a} M_{bcd]} (x) \, = \, 0  \; ,
\end{equation}
which are a consequence of (\ref{eq:maxwell}).

In addition to (\ref{eq:currents}) we assume that we have a 
constitutive law in the form of six scalar equations
\begin{equation}\label{eq:constitutive}
  \mathcal{F}_A \big( x, F(x) , H(x) \big) = 0 \; , 
\qquad A=1, \dots ,6 
\end{equation}
which allow to express six of the twelve independent components 
$F_{ab}(x)$ and $H_{ab}(x)$ in terms of the remaining six. (We 
shall later specify the six components which are to be eliminated.) 
Again, it is essential that the constitutive law is local in the 
sense that knowledge of six components at a particular point $x$ 
allows to express the remaining six components at this particular 
point $x$. In particular, it is essential that (\ref{eq:constitutive}) 
does not involve derivatives of the field components.

We now separate the 8 equations  (\ref{eq:maxwell}) into two 
constraints
\begin{equation}\label{eq:constraints}  
  \partial _{[\mu} F_{\nu \sigma]} (x) = J_{\mu \nu \sigma} (x) \; ,
  \qquad
  \partial _{[\mu} H_{\nu \sigma]} (x) = M_{\mu \nu \sigma} (x) \; ,
\end{equation}
which do not contain any $\partial _0$ derivative, and six evolution
equations
\begin{equation}\label{eq:evolution}  
  \partial _{[0} F_{\nu \sigma]} (x) = J_{0 \nu \sigma} (x) \; ,
  \qquad
  \partial _{[0} H_{\nu \sigma]} (x) = M_{0 \nu \sigma} (x) \; ,
\end{equation}
which do contain $\partial _0$ derivatives. For the well-posedness of
the initial-value problem, only the evolution equations are relevant.
The constraints restrict the allowed initial values. After solving
the evolution equations with initial values that satisfy the constraints,
one has to check whether the constraints are preserved. This is guaranteed
if the currents (\ref{eq:currents}) satisfy the conservation laws
(\ref{eq:jmcons}).

To link up with standard notation of electrodynamics, we decompose
field strength and excitation in electric and magnetic parts.
\begin{equation}\label{eq:EBHD}
\begin{split}
  F_{\nu 0} = E_{\nu} \; , \qquad
  F_{\mu \rho} = \epsilon _{\mu \rho \sigma} B^{\sigma} \; , \quad
\\
  H_{\nu 0} = - \mathcal{H}_{\nu} \; , \qquad
  H_{\mu \rho} = \epsilon _{\mu \rho \sigma} D^{\sigma} \; , 
\end{split}
\end{equation}
where $\epsilon _{\mu \rho \sigma}$ is the 3-dimensional Levi-Civita symbol,
defined by the properties that it is totally antisymmetric and satisfies 
$\epsilon _{123} =1$. 

This puts the constraints (\ref{eq:constraints}) into the form
\begin{equation}\label{eq:constraintsDB}
\partial _{\rho} D^{\rho} \, + \, \dots \, = \, 0 \; , \qquad
\partial _{\rho} B^{\rho} \, + \, \dots \, = \, 0 \; , 
\end{equation}
where the ellipses indicate terms that do not involve derivatives of 
the fields. The evolution equations (\ref{eq:evolution}) can be
conveniently written in six-vector form. To that end, we write $\vec{E}$ 
and $\vec{\mathcal{H}}$ for the three-column vectors with components 
$E_1, E_2, E_3$ and $\mathcal{H}_1,\mathcal{H}_2,\mathcal{H}_3$, 
respectively, and we write $\vec{D}$ and $\vec{B}$ for the three-column 
vectors with components $D^1, D^2, D^3$ and $B^1, B^2, B^3$, respectively. 
Then the evolution equations (\ref{eq:evolution}) read
\begin{equation}\label{eq:six}
  \partial _0 \, 
  \begin{pmatrix}
     \vec{D} \\ \vec{B}
  \end{pmatrix}
  \, - \, 
  \begin{pmatrix} 
    \mathbf{0} &  - \mathbf{A}^{\rho} \\
    \mathbf{A}^{\rho} & \mathbf{0} 
  \end{pmatrix}
  \, \partial _{\rho} \,
  \begin{pmatrix}
     \vec{E} \\ \vec{\mathcal{H}}
  \end{pmatrix}
  \, + \, \dots \, = \, 0 \; .
\end{equation}
Here the $3 \times 3$ matrices $\mathbf{A}^{\rho}$ are defined by
\begin{equation}\label{eq:Arho}
  \mathbf{A}^1 \, = \, 
  \begin{pmatrix}
     0 & 0 & 0 \\
     0 & 0 & 1 \\
     0 & -1 & 0 
  \end{pmatrix}
  \; , \quad 
  \mathbf{A}^2 \, = \, 
  \begin{pmatrix}
     0 & 0 & -1 \\
     0 & 0 & 0 \\
     1 & 0 & 0 
  \end{pmatrix}
  \; , \quad 
  \mathbf{A}^3 \, = \, 
  \begin{pmatrix}
     0 & 1 & 0 \\
     -1 & 0 & 0 \\
     0 & 0 & 0 
  \end{pmatrix}
\end{equation}
and, as in (\ref{eq:constraintsDB}), the ellipses in (\ref{eq:six}) indicate terms 
that do not involve derivatives of the fields.

We shall now require that the constitutive equations (\ref{eq:constitutive})
can be solved for $\vec{E} (x)$ and $\vec{\mathcal{H}} (x)$. If this is the case, we
can eliminate $\vec{E}$ and $\vec{\mathcal{H}}$ from (\ref{eq:six}) with the help of
the constitutive equations, and we are left with a set of six first-order
differential equations for the six dynamical variables $\vec{B}$ and 
$\vec{D}$. If, on the other hand, the constitutive equations cannot be 
solved for $\vec{E} (x)$ and $\vec{\mathcal{H}} (x)$, the number of evolution 
equations does not coincide with the number of independent dynamical 
variables, so there is no chance to get a well-posed initial-value
problem. We introduce the following terminology.

\begin{defin}\label{def:regular}
A coordinate system is called \emph{admissible} if, in this coordinate system,
the constitutive law (\ref{eq:constitutive}) can be solved for $\vec{E} (x)$ 
and $\vec{\mathcal{H}} (x)$. A constitutive law is called \emph{regular} at $x$
if there is an admissible coordinate system on some neighborhood of $x$.
\end{defin}
Henceforth we assume that we have a constitutive law that is 
regular at some chosen point, and we work in an admissible 
coordinate system defined on some open neighbourhood $U$ 
of this point in $M$. We will see in Section \ref{sec:trafo} 
that then, if $U$ is chosen sufficiently small, almost all other 
coordinate systems on $U$ are admissible as well, and we will 
investigate the behaviour under coordinate changes of all
relevant quantities.

The assumption that (\ref{eq:constitutive}) can be solved for $\vec{E} (x)$ and 
$\vec{\mathcal{H}} (x)$ results in equations of the form
\begin{equation}\label{eq:impermittivity}
  \partial _{\rho} E_{\mu} (x) \, = \,  
  \kappa _{\mu \tau} \big( x , \vec{D}(x) , \vec{B}(x) \big)
  \, \partial _{\rho} D^{\tau} (x) \, + \, 
  \chi _{\mu \tau} \big( x , \vec{D}(x) , \vec{B}(x) \big)
  \, \partial _{\rho} B^{\tau} (x) \, + \, \dots 
\end{equation}
\begin{equation}\label{eq:impermeability}
  \partial _{\rho} \mathcal{H}_{\mu} (x) \, = \,  
  \gamma _{\mu \tau} \big( x , \vec{D}(x) , \vec{B}(x) \big)
  \, \partial _{\rho} D ^{\tau}(x) \, + \, 
  \nu _{\mu \tau} \big( x , \vec{D}(x) , \vec{B}(x) \big)
  \, \partial _{\rho} B ^{\tau} (x) \, + \, \dots
\end{equation}
Here, as before, the ellipses indicate terms that do not involve 
derivatives of the fields. In the more particular case that the
constitutive law is linear, the coefficients ${\kappa}_{\mu \tau}$, 
$\chi _{\mu\tau}$, $\gamma _{\mu \tau}$ and $\nu _{\mu \tau}$
depend only on $x$ but not on the fields. In the
following we denote by $\boldsymbol{\kappa}$, $\boldsymbol{\nu}$,
$\boldsymbol{\chi}$ and $\boldsymbol{\gamma}$ the $3 \times 3$ matrices 
with components $\kappa _{\mu \tau}$, $\nu _{\mu \tau}$, $\chi _{\mu \tau}$
and $\gamma _{\mu \tau}$, respectively. $\boldsymbol{\kappa}$ is 
called the \emph{impermittivity} matrix, $\boldsymbol{\nu}$ 
is called the \emph{impermeability} matrix, and $\boldsymbol{\chi}$ 
and $\boldsymbol{\gamma}$ are called the \emph{magneto-electric 
cross-terms}. (Our notation follows Kong \cite{Kong1974,Kong1975}.) 
The standard text-book formalism 
of electrodynamics is recovered if we assume that $\boldsymbol{\chi}$ and 
$\boldsymbol{\gamma}$ vanish and that $\boldsymbol{\kappa}$ and 
$\boldsymbol{\nu}$ depend only on $x$ and are invertible. Then 
$\boldsymbol{\varepsilon} = \boldsymbol{\kappa}^{-1}$ is called 
the \emph{permittivity} (or \emph{dielectricity}) matrix and 
$\boldsymbol{\mu} = \boldsymbol{\nu}^{-1}$ is called the 
\emph{permeability} matrix. A priori, however, there is no reason 
to assume that we can choose our coordinate system such that
the magneto-electric cross-terms vanish and that 
$\boldsymbol{\kappa}$ and $\boldsymbol{\nu}$ are invertible. For a 
detailed discussion of media with magneto-electric cross-terms see
O'Dell \cite{Odell1970}.

The four $3 \times 3$ matrices $\boldsymbol{\kappa}$, $\boldsymbol{\nu}$,
$\boldsymbol{\chi}$ and $\boldsymbol{\gamma}$ can be combined into the 
$6 \times 6$ matrix
\begin{equation}\label{eq:defM}
  \mathbf{M} =   
  \begin{pmatrix} 
    \boldsymbol{\kappa} & \boldsymbol{\chi} \\
    \boldsymbol{\gamma} & \boldsymbol{\nu} 
  \end{pmatrix}
\end{equation}
which we call the \emph{constitutive matrix}.

With (\ref{eq:impermittivity}) and (\ref{eq:impermeability}) inserted
into (\ref{eq:six}), we get the following set of six evolution equations
for the six dynamical variables $\vec{D}$ and $\vec{B}$.
\begin{equation}\label{eq:principal}
  \partial _0 \, 
  \begin{pmatrix}
     \vec{D} \\ \vec{B}
  \end{pmatrix}
  \, - \, \mathbf{L} ^{\rho}  \, 
  \partial _{\rho} \,
  \begin{pmatrix}
     \vec{D} \\ \vec{B}
  \end{pmatrix}
  \, + \, \dots \, = \, 0 \; ,
\end{equation}
where the $6 \times 6$ matrix 
\begin{equation}\label{eq:Lrho}
  \mathbf{L} ^{\rho} \; = \;
  \begin{pmatrix} 
    \mathbf{0} &  - \mathbf{A}^{\rho} \\
     \mathbf{A}^{\rho} & \mathbf{0} 
  \end{pmatrix}
  \, 
  \begin{pmatrix} 
    \boldsymbol{\kappa} & \boldsymbol{\chi} \\
    \boldsymbol{\gamma} & \boldsymbol{\nu} 
  \end{pmatrix}
\end{equation}
depends on $x$ and, in the case of a non-linear constitutive law, also
on $\vec{D}(x)$ and $\vec{B}(x)$. Thus, (\ref{eq:principal}) is a quasilinear
system of partial differential equations with non-constant coefficients. 

Before we proceed further it is useful to add a remark on the fact that we had 
to solve the constitutive equations for $\vec{E}$ and $\vec{\mathcal{H}}$,
rather than for any other combination of field components, as a necessary
condition for having a well-posed initial-value problem. It is sometimes 
argued (see, e.g., O'Dell \cite{Odell1970}, Section 2.1, or Hehl and 
Obukhov \cite{HehlObukhov2005}) that one should solve the constitutive 
equations either for $\vec{E}$ and $\vec{B}$, or for
$\vec{D}$ and $\vec{\mathcal{H}}$, because only then has the resulting
equation a covariant (i.e., four-dimensional, coordinate-independent)
meaning. According to this point of view, constitutive equations solved 
for other combinations of the field components are ``a historical artifact''
and ``should be phased out from use'' \cite{HehlObukhov2005}.
It is, indeed, true that the condition of solvability for 
$\vec{E}$ and $\vec{\mathcal{H}}$ is not covariant. Nonetheless,
it is precisely this condition which appears if we ask for a well-posed
initial-value problem. This should not come as a surprise. The initial-value
problem refers to a particular slicing of the spacetime into hypersurfaces
$x^0 = \mathrm{constant}$. It is \emph{not} a problem that has a covariant
answer; the initial-value problem is well-posed for some slicings, and
not well-posed for others. So it is quite natural that non-covariant
conditions play a role. 

If the constitutive law can be solved not only for $\vec{E}$ and 
$\vec{\mathcal{H}}$ but also for $\vec{E}$ and $\vec{B}$, the 
impermittivity matrix $\boldsymbol{\kappa}$ must be invertible.
The constitutive matrix (\ref{eq:defM}) can then be written in the form
\begin{equation}\label{eq:transM}
  \mathbf{M} \,
  \; = \;
  \begin{pmatrix} 
    \mathbf{1}  & \mathbf{0} \: 
    \\[0.1cm]
    \boldsymbol{\gamma} \boldsymbol{\kappa} ^{-1} 
    & \mathbf{1} \:
  \end{pmatrix}
  \begin{pmatrix} 
    \: \boldsymbol{\kappa} \: 
    &  \mathbf{0} 
    \\[0.1cm]
    \mathbf{0} & \: \boldsymbol{\nu} - 
    \boldsymbol{\gamma} \boldsymbol{\kappa} ^{-1} \boldsymbol{\chi} \:
  \end{pmatrix}
  \begin{pmatrix} 
    \: \mathbf{1} \: & \: \boldsymbol{\kappa} ^{-1} \boldsymbol{\chi} \:
    \\[0.1cm]
    \mathbf{0} & \mathbf{1}
  \end{pmatrix}
\end{equation}
as can be easily verified by multiplying out the right-hand side. 
(\ref{eq:transM}) implies
\begin{equation}\label{eq:detM}
\mathrm{det} ( \mathbf{M} ) \, = \, \mathrm{det} (\boldsymbol{\nu} -
\boldsymbol{\gamma} \boldsymbol{\kappa} ^{-1} \boldsymbol{\chi})
\, \mathrm{det} ( \boldsymbol{\kappa} ) \; .
\end{equation}
Analogously, if $\boldsymbol{\nu}$ is invertible, we find
\begin{equation}\label{eq:transM2}
  \mathbf{M} \,
  \; = \;
  \begin{pmatrix} 
    \: \mathbf{1} \: & \: \boldsymbol{\chi} \boldsymbol{\nu} ^{-1} \: 
    \\[0.1cm]
    \mathbf{0} & \mathbf{1}
  \end{pmatrix}
  \begin{pmatrix} 
    \: \boldsymbol{\kappa} - \boldsymbol{\chi} \boldsymbol{\nu} ^{-1} \boldsymbol{\gamma} \:
    &  \: \mathbf{0} \:
    \\[0.1cm]
    \mathbf{0} & \boldsymbol{\nu}  
  \end{pmatrix}
  \begin{pmatrix} 
    \mathbf{1} & \: \mathbf{0} \:
    \\[0.1cm]
    \: \boldsymbol{\nu} ^{-1} \boldsymbol{\gamma} \: & \mathbf{1}
  \end{pmatrix}
\end{equation}
and hence
\begin{equation}\label{eq:detM2}
\mathrm{det} ( \mathbf{M} ) \, = \, \mathrm{det} (\boldsymbol{\kappa} -
\boldsymbol{\chi} \boldsymbol{\nu} ^{-1} \boldsymbol{\gamma})
\, \mathrm{det} ( \boldsymbol{\nu} ) \; .
\end{equation}
Equations (\ref{eq:transM}) and (\ref{eq:transM2}) are useful for
calculating the inverse of $\mathbf{M} \,$.

%------------------------------------------------------------------------------  
\section{Approximate-plane-wave solutions of Maxwell's equations}
\label{sec:planewave}
\noindent
Now we want to derive the characteristic equation of the evolution 
equations (\ref{eq:principal}) and the resulting light cone
structure. There are two quite different methods of how to do this. The
first method, which goes back to Hadamard, investigates the
directions in which discontinuities of the electromagnetic field
can propagate. For linear constitutive laws on a bare manifold,
this method is used in the book by Hehl and Obukhov
\cite{HehlObukhov2003}. The second method investigates 
the directions in which approximate-plane-wave solutions of
Maxwell's equations can travel. This method was pioneered by 
Luneburg whose work is reviewed in the book by Kline and
Kay \cite{KlineKay1965}; their treatment is restricted to
linear and isotropic constitutive laws on Minkowski spacetime.

Here we want to use the second method because it provides
us with a clear physical interpretation of the characteristic
equation. As there are no treatments in the literature that
cover our situation -- Maxwell's equations with a local but 
possibly nonlinear constitutive law on a bare manifold --, we
give a detailed and self-contained exposition. Our first task
is to define the notion of an ``approximate-plane-wave solution''. 
 
In standard electrodynamics on Minkowski spacetime, wave propagation
can be studied in terms of plane harmonic waves. Maxwell's equations
on a bare manifold do not admit plane-harmonic-wave solutions in general. 
However, they do admit such solutions in an approximative sense. To make 
this mathematically precise, we introduce the following terminology.
\begin{defin}\label{def:wave}
An \emph{approximate-plane-wave family} with background field 
$\vec{D}(x)$ , $\vec{B}(x)$ is a one-parameter family
\begin{equation}\label{eq:wave}
  \begin{pmatrix}
     \vec{\mathfrak{D}} (\alpha , x) \\[0.2cm] \vec{\mathfrak{B}} (\alpha , x)
  \end{pmatrix}
   \, = \, 
  \begin{pmatrix}
     \vec{D} (x) \\[0.2cm] \vec{B} (x)
  \end{pmatrix}
  \, + \, \alpha \,  
  \mathrm{Re} \left\{ \, \Big(
  \begin{pmatrix}
     \vec{\, d} ( x) \\[0.2cm] \vec{\, b} ( x)
  \end{pmatrix}
   \, + \, O( \alpha ) \, \Big)  \, \mathrm{exp} \big(i S(x)/{\alpha} \big) 
   \; \right\}
\end{equation}
with the following properties:
\begin{itemize}
\item[(a)]
The coordinates $x = (x^0,x^1,x^2,x^3)$ range over an open neighborhood
$U$ of the manifold $M$ and the parameter $\alpha$ ranges over the
strictly positive real numbers, $\alpha \in \mathbb{R}^+$.
\item[(b)]
$\vec{D}$ and $\vec{B}$ are $\mathbb{R}^3$ valued $C^{\infty}$ functions. 
\item[(c)]
$S$ is a real-valued $C^{\infty}$ function whose gradient $dS(x) = 
\partial _a S(x) \, dx^a$ has no zeros on $U$. 
We refer to $S$ as to the \emph{eikonal function} of the 
approximate-plane-wave family.
\item[(d)]
$\vec{\, d}$ and $\vec{\, b}$ are $\mathbb{C}^3$ valued
$C^{\infty}$ functions with $ \big( \vec{\, d}(x) , \vec{\, b}(x) \big)
 \, \neq \, \big( \vec{\, 0} , \vec{\, 0} \big)$
for all $x$ in $U$,
\end{itemize}
\end{defin}
If $\partial _a S$, $\vec{\, d}$ and $\vec{\, b}$ are independent
of $x$ (in the chosen coordinate system) and the $O(\alpha )$ terms 
in (\ref{eq:wave}) are zero, (\ref{eq:wave}) is a background field 
with an $\alpha$-dependent plane harmonic wave added; the wave covector 
of this plane harmonic wave is given by $k_a = \partial _a S / \alpha$.
This observation gives the following interpretation to an arbitrary 
approximate-plane-wave family. On a sufficiently small neighborhood, 
$\partial _a S$, $\vec{\, d}$ and $\vec{\, b}$ differ arbitrarily 
little from constants, and for $\alpha$ sufficiently small the 
$O(\alpha )$ terms give arbitrarily small contributions. Thus, on a 
small neighborhood and for small $\alpha$, (\ref{eq:wave}) can be 
approximately viewed as a plane harmonic wave added to the background 
field. The smaller $\alpha$, the more oscillations we have on the 
chosen neighborhood. We refer to $\alpha \to 0$ as to the 
\emph{high-frequency limit}. 

For the evaluation of approximate-plane-wave families the following
simple lemma is crucial.
\begin{lem}\label{lem:lim}
Let $S$ be the eikonal function of an approximate-plane-wave
family and let $u$ be a complex-valued continuous function 
defined on the same neighborhood $U$ as $S$. Then  
$\underset{\alpha \, \to \, 0}{\mathrm{lim}} \, \mathrm{Re}
\big\{ u(x) e^{iS(x)/\alpha} \big\}$ exists for all $x$ in
$U$ if and only if $u(x)=0$ for all $x$ in $U$.
\end{lem}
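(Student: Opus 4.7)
The \emph{if} direction is immediate: if $u(x)=0$, then $\mathrm{Re}\{u(x)e^{iS(x)/\alpha}\}=0$ for every $\alpha>0$, so the limit trivially exists (and equals $0$). The entire content of the lemma therefore lies in the converse.

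For the \emph{only if} direction my plan is to fix a point $x\in U$ and distinguish two cases according to whether $S(x)$ is zero or not. When $S(x)\neq 0$, I would write $u(x)=r(x)\,e^{i\theta(x)}$ with $r(x)\geq 0$ and observe
\begin{equation*}
  \mathrm{Re}\!\left\{u(x)\,e^{iS(x)/\alpha}\right\} \,=\, r(x)\,\cos\!\big(S(x)/\alpha+\theta(x)\big).
\end{equation*}
Since $S(x)$ is a fixed nonzero real number, the argument $S(x)/\alpha+\theta(x)$ runs monotonically to $\pm\infty$ as $\alpha\to 0^+$, and one can exhibit two sequences $\alpha_n,\alpha_n'\to 0^+$ along which the cosine takes the values $+1$ and $-1$ respectively (just solve $S(x)/\alpha_n+\theta(x)=2\pi n$ and $S(x)/\alpha_n'+\theta(x)=(2n+1)\pi$ for $n$ large). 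Convergence of $r(x)\cos(\cdots)$ as $\alpha\to 0^+$ then forces $r(x)=0$, that is, $u(x)=0$.

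The only delicate point is the set $Z=\{x\in U:S(x)=0\}$, where $e^{iS(x)/\alpha}$ stops oscillating and the above argument yields no information. Here I would invoke hypothesis (c), which asserts $dS(x)\neq 0$ throughout $U$. By the implicit function theorem, $Z$ is either empty or a smooth hypersurface in $U$; in particular, $U\setminus Z$ is dense in $U$. From the previous paragraph I already know $u\equiv 0$ on $U\setminus Z$, and continuity of $u$ then immediately extends this to all of $U$. The only mildly nontrivial step is this last density argument; once one notices that $dS\neq 0$ rules out $S$ being locally constant, everything reduces to the elementary oscillation fact about the real part of a unimodular factor.
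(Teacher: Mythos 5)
Your proposal is correct and is essentially the paper's own argument, merely run in the direct rather than the contrapositive direction: the paper assumes $u(x_0)\neq 0$, uses continuity to get an open set $V$ where $u\neq 0$, and notes that $dS\neq 0$ forces $S(x)\neq 0$ for almost all $x\in V$, so the oscillating factor destroys convergence there, while you establish $u=0$ pointwise on the dense set $\{S\neq 0\}$ and extend by continuity across the hypersurface $\{S=0\}$. The two ingredients -- the elementary oscillation fact when $S(x)\neq 0$, and the smallness of the zero set of $S$ guaranteed by hypothesis (c) -- are identical, and your spelled-out choice of sequences $\alpha_n,\alpha_n'$ just makes explicit what the paper leaves implicit.
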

\begin{proof}
If $u$ is different from zero at some point in $U$, it is 
different from zero on an open subset $V$ of $U$. For almost all
$x$ in $V$ we have $S(x) \neq 0$, as $dS$ has no zeros. As a 
consequence, for almost all $x$ in $V$ the limit does not exist.
\end{proof} 

We now consider an approximate-plane-wave family, given in 
coordinates $x=(x^0,x^1,x^2,x^3)$ that are admissible in the
sense of Definition \ref{def:regular}. 
%With the help of Lemma
%\ref{lem:lim}, it is easy to verify that the 
%constitutive equations associate with (\ref{eq:wave}) the
%one-parameter family
%\begin{equation}\label{eq:waveEH}
%  \begin{pmatrix}
%     \vec{\mathfrak{E}} (\alpha , x) \\[0.2cm] \vec{\mathfrak{H}} (\alpha , x)
%  \end{pmatrix}
%   \, = \, 
%  \begin{pmatrix}
%     \vec{E} (x) \\[0.2cm] \vec{\mathcal{H}} (x)
%  \end{pmatrix}
%  \, + \, \alpha \,  
%  \mathrm{Re} \left\{ \, \Big(
%  \begin{pmatrix}
%     \vec{\, e} ( x) \\[0.2cm] \vec{\, h} ( x)
%  \end{pmatrix}
%   \, + \, O( \alpha ) \, \Big)  \, \mathrm{exp} \big(i S(x)/{\alpha} \big) 
%   \; \right\}
%\end{equation}
%where the background fields $\vec{D}$, $\vec{B}$, $\vec{E}$ and 
%$\vec{\mathcal{H}}$ are related by the constitutive law, whereas 
%the complex amplitudes $\vec{d}$, $\vec{b}$, $\vec{e}$ and $\vec{h}$ 
%are related by the \emph{linearized} constitutive law,
%\begin{equation}\label{eq:dbeh}
%  \begin{pmatrix}
%     \vec{\, e} ( x) \\[0.2cm] \vec{\, h} ( x)
%  \end{pmatrix}
%  \, = \, 
%  \begin{pmatrix} 
%    \, \boldsymbol{\kappa} \big( x , \vec{D} (x) , \vec{B} (x) \big) \,
%   & 
%    \, \boldsymbol{\chi} \big( x , \vec{D} (x) , \vec{B} (x) \big) \, 
%   \\[0.2cm]
%    \boldsymbol{\gamma} \big( x , \vec{D} (x) , \vec{B} (x) \big)
%   & 
%   \boldsymbol{\nu} \big( x , \vec{D} (x) , \vec{B} (x) \big)
%  \end{pmatrix}
%  \, 
%  \begin{pmatrix}
%     \vec{\, d} ( x) \\[0.2cm] \vec{\, b} ( x)
%  \end{pmatrix}
%  \; .
%\end{equation}
We assume that the background fields $\vec{D} (x)$ and 
$\vec{B} (x)$ satisfy the constraints (\ref{eq:constraintsDB}) 
and the evolution equations (\ref{eq:principal}). We then say
that the approximate-plane-wave family (\ref{eq:wave}) is an 
$N^{\mathrm{th}}$ order \emph{asymptotic solution} to the constraints 
(\ref{eq:constraintsDB}) if
\begin{equation}\label{eq:asymptoticcon}
  \underset{\alpha \, \to \, 0}{\mathrm{lim}} \;
  \left\{ \, \frac{1}{\alpha ^N} \, 
  \big( 
  \, \partial _{\rho} \, \mathfrak{D}^{\rho} (\alpha , x) 
  \, + \, \dots \,
  \big) \; \right\}
  \, = \, 0 \; , \qquad
  \underset{\alpha \, \to \, 0}{\mathrm{lim}} \;
  \left\{ \, \frac{1}{\alpha ^N} \, 
  \big( 
  \, \partial _{\rho} \, \mathfrak{B}^{\rho} (\alpha , x)
  \, + \, \dots \,
  \big) \; \right\}
  \, = \, 0 \; , 
\end{equation}
and to the evolution equations (\ref{eq:principal}) if
\begin{equation}\label{eq:asymptotic}
\underset{\alpha \, \to \, 0}{\mathrm{lim}} \;
\left\{ \, \frac{1}{\alpha ^N} \, 
\left( 
  \partial _0 \, 
  \begin{pmatrix}
     \vec{\mathfrak{D}} (\alpha , x) 
  \\[0.2cm]
   \vec{\mathfrak{B}} (\alpha , x )
  \end{pmatrix}
  \, - \, 
  \mathbf{L} ^{\rho} \big( 
  x , \vec{\mathfrak{D}}(\alpha , x) , \vec{\mathfrak{B}}(\alpha , x) 
  \big) \, 
  \partial _{\rho} \,
  \begin{pmatrix}
     \vec{\mathfrak{D}} (\alpha , x) 
  \\[0.2cm]
   \vec{\mathfrak{B}} (\alpha , x )
  \end{pmatrix}
  \, + \, \dots \, 
\right) \, 
\right\} \; = \; 
  \begin{pmatrix}
     \vec{\, 0} \\ \vec{\, 0}
  \end{pmatrix}
  \; .
\end{equation}
Here the ellipses stand for the same terms as in (\ref{eq:constraintsDB})
and (\ref{eq:principal}), respectively. It is obvious that an $N^{\mathrm{th}}$ 
order asymptotic solution is automatically an $M^{\mathrm{th}}$ order 
asymptotic solution for all $M \le N$. We want to investigate the lowest 
non-trivial order $N=0$, which is known as the \emph{geometric optics
approximation}. (We will not consider asymptotic solutions of higher order, 
which allow to determine the $O(\alpha)$ terms in (\ref{eq:wave}) iteratively.) 
Then in (\ref{eq:asymptoticcon}) and (\ref{eq:asymptotic}) 
the terms indicated by ellipses, which do not contain derivatives of the 
fields, give no contribution to the limit. Using Lemma \ref{lem:lim}, we 
find that an approximate-plane-wave family (\ref{eq:wave}) is an asymptotic 
solution of order $N=0$ to the 
constraints if and only if 
\begin{equation}\label{eq:constraintsS}
  \partial _{\rho} S (x) \, d^{\rho} (x) \, = \, 0 \; , \qquad
  \partial _{\rho} S (x) \, b^{\rho} (x) \, = \, 0 \; , 
\end{equation}
and to the evolution equations if and only if
\begin{equation}\label{eq:eigenS}
  \partial _{\rho} S (x) \,
  \mathbf{L} ^{\rho}  \big(  x , \vec{D} (x) , \vec{B}(x) \big) \,  
  \begin{pmatrix}
     \vec{\, d} (x) \\ \vec{\, b} (x)
  \end{pmatrix}
  \; = \; 
  \partial _0 S (x) \, 
  \begin{pmatrix}
     \vec{\, d} (x) \\ \vec{\, b} (x)
  \end{pmatrix}
  \; .
\end{equation}
As the amplitudes are assumed to be non-zero, (\ref{eq:eigenS})
means that $\partial _0 S (x) $ must be an eigenvalue of the 
$6 \times 6$ matrix $\partial _{\rho} S (x) \, \mathbf{L} ^{\rho} 
\big( x , \vec{D} (x) , \vec{B}(x) \big)$, i.e.
\begin{equation}\label{eq:eikonal}
\mathrm{det} \, 
\left( 
\partial _0 S (x) \, \mathbf{1} \, - \, \partial _{\rho} S (x)\,
\mathbf{L} ^{\rho} \big(  x, \vec{D} (x) , \vec{B}(x) \big) \, 
\right)
\; = \; 0 \; .
\end{equation}
(\ref{eq:eikonal}) is the \emph{eikonal equation} of the evolution equations. 
It is a first order partial differential equation for $S$. Its
coefficients depend not only on $x$ but also on the background fields 
$\vec{D}(x)$ and $\vec{B}(x)$, unless we restrict to linear constitutive 
laws.

%------------------------------------------------------------------------------  
\section{The characteristic equation}
\label{sec:characteristic}
\noindent
Now we choose values for $x$, $\vec{D}(x)$ and $\vec{B}(x)$ and keep them fixed. 
We want to investigate the condition that there is a nonzero $\big( \vec{\, d} , 
\vec{\, b} \, \big)$ such that (\ref{eq:constraintsS}) and (\ref{eq:eigenS}) hold.
This is an algebraic condition on the covector $dS = \partial _a S \, dx^a$, i.e.,
it determines a subset of the cotangent space at $x$. We will call any covector 
which satisfies this condition a ``characteristic covector''. Hence, a covector 
is characteristic if and only if it is the gradient of an eikonal function of 
an approximate-plane-wave solution of order $N=0$. The precise definition reads
as follows. (As $x$, $\vec{D}(x)$, $\vec{B}(x)$ are now kept fixed, 
dependence on these quantities is no longer made explicit to ease notation.)
\begin{defin}\label{def:characteristic}
A covector $p_a dx^a$ is called \emph{characteristic} if there is $\big( \vec{\, d}
, \vec{\, b} \, \big) \neq \big( \vec{\, 0}, \vec{\, 0} \big)$ in $\mathbb{R}^6$ such that
\begin{equation}\label{eq:constraintsp}
  p _{\rho} \, d^{\rho} \, = \, 0 \; , \qquad
  p _{\rho} \, b^{\rho} \, = \, 0 \; , 
\end{equation}
\begin{equation}\label{eq:eigenp}
  p _{\rho} \,
  \mathbf{L} ^{\rho}  \, 
  \begin{pmatrix}
      \vec{\, d} \, \\  \vec{\, b} \,
  \end{pmatrix}
  \; = \; 
  p _0 \, 
  \begin{pmatrix}
      \vec{\, d} \, \\  \vec{\, b} \,
  \end{pmatrix}
  \; .
\end{equation}
The set of all characteristic covectors is called the \emph{characteristic
variety}. 
\end{defin}
We will show in Section \ref{sec:trafo} that this definition is 
coordinate-independent. Clearly, a necessary condition for 
$p_a dx^a$ to be characteristic is that it satisfies 
\begin{equation}\label{eq:characteristic}
\mathrm{det}  
\left( 
p _0 \, \mathbf{1} \, - \, p _{\rho} \,
\mathbf{L} ^{\rho} \, 
\right)
\; = \; 0 \; .
\end{equation}
(\ref{eq:characteristic}) is called the \emph{characteristic equation}
and its left-hand side is called the \emph{characteristic polynomial}
of the evolution equations. Note that (\ref{eq:characteristic}) makes
sure that (\ref{eq:eigenp}) admits a non-trivial solution but does
not take (\ref{eq:constraintsp}) into account.

By writing covectors as $p_a dx^a$ we have introduced canonical momentum
coordinates $p_a$ in each cotangent space which are conjugate to our
admissible coordinates $x^a$. If $x^0$ can be interpreted as a temporal
coordinate and the $x^{\mu}$ as spatial coordinates, we refer to $p_0$
as to the \emph{frequency} and to $\vec{p}$ as to the \emph{spatial wave 
covector} (for dimensional reasons, one may put in a factor $\hbar$); 
here and in the following, $\vec{p}$ stands for the three-column vector 
with components $p_1, p_2, p_3$. 

The left-hand side of (\ref{eq:characteristic}) is a sixth order
homogeneous polynomial in the variables $p_a$. For each $\vec{p} \in 
\mathbb{R}^3$, it is a sixth order polynomial in the variable $p_0$. 
As such, it has six complex roots which will be denoted by 
$\omega _A ( \vec{p} )$ where $A=1, \dots , 6$. In this notation, 
the characteristic equation (\ref{eq:characteristic}) takes the form 
\begin{equation}\label{eq:omega}
    \prod _{A=1} ^6 \big( p_0 - \omega _A ( \vec{p} ) \big) 
  \: = \: 0 \: .
\end{equation}
The $\omega _A ( \vec{p} )$ are the eigenvalues of the 
matrix $p_{\rho} \mathbf{L}^{\rho}$, 
\begin{equation}\label{eq:eigen}
  p_{\rho} \mathbf{L}^{\rho} \, 
  \begin{pmatrix} 
    \vec{\, d}_A ( \vec{p} ) \\ \vec{\, b}_A ( \vec{p} ) 
  \end{pmatrix}
  \, = \, 
  \omega _A ( \vec{p})  \, 
  \begin{pmatrix} 
    \vec{\, d}_A ( \vec{p} ) \\ \vec{\, b}_A ( \vec{p} ) 
  \end{pmatrix} 
  \; .
\end{equation}
In general they are complex, and so are the components of the eigenvectors
$\vec{\, d}{}_A (\vec{p})$ and $\vec{\, b}{}_A ( \vec{p})$.
Only real $\omega _A ( \vec{p} )$ are related to 
approximate-plane-wave solutions, because the eikonal function 
is supposed to be a real function. (Non-real eigenvalues are associated 
not with oscillating modes but rather with so-called \emph{evanescent 
modes}, i.e., with exponentially decaying fields.) From the real
solutions $\omega_A (\vec{p})$ of the eigenvalue problem 
(\ref{eq:eigen}) we have to single out those for which the eigenvectors
satisfy the constraints (\ref{eq:constraintsp}); this will give us all
the characteristic covectors in the form $\omega _A (\vec{p} ) dx^0
+ p_{\mu} dx^{\mu}$.

If $\vec{p}$ runs over $\mathbb{R} ^3 \setminus \{\vec{\, 0}\}$, $p_{\rho}
\mathbf{A} ^{\rho}$ runs over all non-zero antisymmetric $3 \times 3$ matrices.
Any such matrix has a one-dimensional kernel. As we can read from 
(\ref{eq:Lrho}), this implies that the kernel of the matrix $p_{\rho} 
\mathbf{L}^{\rho}$ must be at least two-dimensional, hence $p_0 = 0$
is an eigenvalue of algebraic multiplicity $\ge 2$ and the characteristic 
equation is of the form
\begin{equation}\label{eq:linfac}
  p_0 ^2 \prod _{A=1} ^4 \big( p_0 - \omega _A (\vec{p} ) \big) 
  \; = \, 0 \; . 
\end{equation}
The eigenvalues $\omega _A ( \vec{p} )$, for $A=1, \dots ,4$, are in general
complex and some of them may be zero as the algebraic multiplicity of the
eigenvalue $p_0=0$ may be bigger than 2.
If $\mathbf{M}$ is invertible, the kernel of the matrix $p_{\rho} 
\mathbf{L}^{\rho}$ is precisely two-dimensional, hence $p_0=0$ is 
an eigenvalue of geometric multiplicity equal to 2; however, even 
in this case the algebraic multiplicity may be bigger than 2.

We will now show that, if $\omega _A ( \vec{p} )\neq 0$, the 
corresponding eigenvectors automatically satisfy the constraints 
(\ref{eq:constraintsp}). 

\begin{proposition}\label{prop:transverse}
  If (\ref{eq:eigen}) holds with $\omega _A ( \vec{p} ) \neq 0$, the  
  $\vec{\, d}_A (\vec{p} )$ and 
  $\vec{\, b}_A (\vec{p} )$ satisfy $d^{\rho} _A (\vec{p} ) p_{\rho} =  
  b^{\rho} _A (\vec{p} ) p_{\rho} = 0$.
\end{proposition}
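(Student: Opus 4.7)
The plan is to extract the block structure of $\mathbf{L}^\rho$ from (\ref{eq:Lrho}) and exploit the antisymmetry of the matrix $p_\rho\mathbf{A}^\rho$. I would first introduce an auxiliary six-vector $\binom{\vec{E}_A}{\vec{\mathcal{H}}_A} := \mathbf{M}\binom{\vec{\,d}_A}{\vec{\,b}_A}$. Substituting the factorisation $\mathbf{L}^\rho = \bigl(\begin{smallmatrix} \mathbf{0} & -\mathbf{A}^\rho \\ \mathbf{A}^\rho & \mathbf{0} \end{smallmatrix}\bigr)\mathbf{M}$ into the eigenvalue equation (\ref{eq:eigen}) splits it into the two vectorial block equations
\[
-(p_\rho\mathbf{A}^\rho)\,\vec{\mathcal{H}}_A \;=\; \omega_A(\vec{p})\,\vec{\,d}_A(\vec{p}), \qquad (p_\rho\mathbf{A}^\rho)\,\vec{E}_A \;=\; \omega_A(\vec{p})\,\vec{\,b}_A(\vec{p}).
\]

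The key observation, which drives the whole argument, is that $p_\rho\mathbf{A}^\rho$ represents (up to sign) the cross product with $\vec{p}$: inspection of (\ref{eq:Arho}) gives $(\mathbf{A}^\rho)_{\mu\nu} = \epsilon_{\rho\mu\nu}$, so $(p_\rho\mathbf{A}^\rho)_{\mu\nu} = p_\rho\epsilon_{\rho\mu\nu}$ is antisymmetric in $\mu,\nu$, and $p_\mu p_\rho\epsilon_{\rho\mu\nu}=0$ by the standard symmetric-times-antisymmetric cancellation. Hence $\vec{p}^{\,T}$ is a left null-vector of $p_\rho\mathbf{A}^\rho$. Contracting each of the two block equations with $p_\mu$ from the left then produces $\omega_A(\vec{p})\,p_\rho d^\rho_A(\vec{p}) = 0$ and $\omega_A(\vec{p})\,p_\rho b^\rho_A(\vec{p}) = 0$, and dividing through by $\omega_A(\vec{p})\neq 0$ delivers (\ref{eq:constraintsp}).

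I do not anticipate any genuine obstacle here: the entire content of the proposition reduces to the identity $\vec{p}^{\,T}(p_\rho\mathbf{A}^\rho) = \vec{0}^{\,T}$, which is immediate from the definitions in (\ref{eq:Arho}). It is worth flagging only that the argument is insensitive to whether or not $\mathbf{M}$ is invertible, since $\vec{E}_A$ and $\vec{\mathcal{H}}_A$ serve merely as auxiliary labels for $\mathbf{M}\binom{\vec{\,d}_A}{\vec{\,b}_A}$ while the conclusion concerns only the $\vec{\,d}_A,\vec{\,b}_A$ side of the equations.
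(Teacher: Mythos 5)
Your proposal is correct and is essentially the paper's own proof: in both cases the entire content is the observation that $\vec{p}$ annihilates the antisymmetric matrix $p_{\rho}\mathbf{A}^{\rho}$, which the paper implements by taking the scalar product of (\ref{eq:eigen}) with $\left(\begin{smallmatrix} a\,\vec{p} \\ b\,\vec{p} \end{smallmatrix}\right)$ for arbitrary $a,b$ and moving the symmetric block matrix across the product, while you contract the two block equations with $p_{\mu}$ from the left --- a purely cosmetic repackaging of the same identity. Your closing remark that invertibility of $\mathbf{M}$ plays no role also matches the paper, whose proof likewise never uses it.
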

\begin{proof}
  The eigenvalue equation (\ref{eq:eigen}) implies that, for all complex 
  numbers $a$ and $b$,
\begin{equation}\label{eq:mult}
  \begin{pmatrix} 
    a \, \vec{p} \\ b \, \vec{p} 
  \end{pmatrix}
  \, \cdot \,
    \begin{pmatrix} 
    \mathbf{0} &  - p_{\rho} \mathbf{A}^{\rho} \\
    p_{\rho} \mathbf{A}^{\rho} & \mathbf{0} 
  \end{pmatrix}
  \, 
  \begin{pmatrix} 
    \boldsymbol{\kappa} & \boldsymbol{\chi} \\
    \boldsymbol{\gamma} & \boldsymbol{\nu} 
  \end{pmatrix}
  \,  
  \begin{pmatrix} 
    \vec{\, d}_A ( \vec{p} ) \\ \vec{\, b}_A ( \vec{p} ) 
  \end{pmatrix}
  \, = \, 
  \begin{pmatrix} 
    a \, \vec{p} \\ b \, \vec{p} 
  \end{pmatrix}
  \, \cdot \,
  \omega _A ( \vec{p})  \, 
  \begin{pmatrix} 
    \vec{\, d}_A ( \vec{p} ) \\ \vec{\, b}_A ( \vec{p} ) 
  \end{pmatrix} 
  \; ,
\end{equation}
where the dot denotes the standard scalar product in $\mathbb{C}^6$. Using the antisymmetry
of $p_{\rho} \mathbf{A} ^{\rho}$, this can be rewritten as 
\begin{equation}\label{eq:mult2}
    \begin{pmatrix} 
    \mathbf{0} &  - p_{\rho} \mathbf{A}^{\rho} \\
    p_{\rho} \mathbf{A}^{\rho} & \mathbf{0} 
  \end{pmatrix}
  \, 
  \begin{pmatrix} 
    a \, \vec{p} \\ b \, \vec{p} 
  \end{pmatrix}
  \, \cdot \,
  \begin{pmatrix} 
    \boldsymbol{\kappa} & \boldsymbol{\chi} \\
    \boldsymbol{\gamma} & \boldsymbol{\nu} 
  \end{pmatrix}
  \,  
  \begin{pmatrix} 
    \vec{\, d}_A ( \vec{p} ) \\ \vec{\, b}_A ( \vec{p} ) 
  \end{pmatrix}
  \, = \, 
  \begin{pmatrix} 
    a \, \vec{p} \\ b \, \vec{p} 
  \end{pmatrix}
  \, \cdot \,
  \omega _A ( \vec{p})  \, 
  \begin{pmatrix} 
    \vec{\, d}_A ( \vec{p} ) \\ \vec{\, b}_A ( \vec{p} ) 
  \end{pmatrix} 
  \; .
\end{equation}
As $\vec{p}$ spans the kernel of $p_{\rho} \mathbf{A} ^{\rho}$, the left-hand side
vanishes, so the right-hand side has to vanish for all complex numbers $a$ and $b$.
As $\omega _A (\vec{p} )$ is non-zero, this completes the proof.
\end{proof}
One may interpret the constraints (\ref{eq:constraintsp}) as saying that 
$\vec{\, d}$ and $\vec{\, b}$ must be \emph{transverse}. By Proposition 
\ref{prop:transverse}, this transversality condition follows already from 
the evolution equations for modes with non-zero frequency. It is interesting to 
note that such a transversality condition does not hold for $\vec{\, e}$ and
$\vec{\, h}$; on a manifold without a metric, transversality cannot even be 
formulated for these fields because one would have to raise an index. This is 
another indication that the choice of $\vec{D}$ and $\vec{B}$ as the dynamical 
variables is the natural one, cf. Kong \cite{Kong1974} and \cite{Kong1975}, 
Section 3.3, where an analogue of Proposition \ref{prop:transverse} is discussed 
for the special case of homogeneous linear media on Minkowski spacetime.

If the algebraic multiplicity of zero, as an eigenvalue of $p_{\rho} \mathbf{L} 
^{\rho}$, is equal to 2, Proposition \ref{prop:transverse} guarantees that a real 
covector $p_a dx^a$ is a characteristic covector if and only if it satisfies the 
\emph{reduced characteristic equation}
\begin{equation}\label{eq:redchar}
\prod _{A=1} ^4 \, 
\big( \, p_0 \, - \, \omega _A ( \vec{p} ) \, \big)
\, = \, 
\frac{1}{p_0^{\, 2}} \, \mathrm{det} \left( 
p _0 \, \mathbf{1} \, - \, p _{\rho} \,
\mathbf{L} ^{\rho} \, 
\right)
\; = \; 0 \; .
\end{equation}
Whereas (\ref{eq:characteristic}) is the ``characteristic equation of the 
evolution equations'', (\ref{eq:redchar}) can then be properly called the
``characteristic equation of the full Maxwell equations'' (evolution
equations plus constraints). 

The situation is very much more inconvenient if, for some $\vec{p} \in 
\mathbb{R} ^3 \setminus \{ \vec{0} \}$, the algebraic multiplicity of zero, 
as an eigenvalue of $p_{\rho} \mathbf{L} ^{\rho}$, is $>2$. In this case
one of the four roots $\omega _A ( \vec{p})$ of the reduced characteristic 
equation (\ref{eq:redchar}) is zero. The corresponding covector $\omega _a
( \vec{p} ) dx^0 + p_{\mu} dx^{\mu}$ may or may not be characteristic, i.e.,
the eigenvectors may or may not satisfy the constraints. If they do, 
we have an approximate-plane-wave solution of Maxwell's equations of order
$N=0$ with zero frequency. One would conclude from this observation
that $x^0$ cannot be interpreted as a temporal coordinate. Therefore 
it seems reasonable to discard such cases as unphysical, i.e., to restrict
to cases where the coordinates can be chosen such that zero-frequency
modes do not occur.

If $\boldsymbol{\kappa}$ is invertible, it is convenient to write 
the reduced characteristic polynomial (\ref{eq:redchar}) in the form
\begin{equation}\label{eq:TR}
 \prod _{A=1} ^4 \big( p_0 - \omega _A (\vec{p} ) \big)
 \; = \;
 \mathrm{det} ( \boldsymbol{\kappa} ) \,
 \mathcal{G}^{abcd}p_ap_bp_cp_d 
\end{equation}  
so that the characteristic equation reads $\mathcal{G}^{abcd}
p_ap_bp_cp_d = 0$. The reason for introducing the factor 
$\mathrm{det} ( \boldsymbol{\kappa} )$
is that the coefficients $\mathcal{G}^{abcd}$ transform like a 
tensor density, as will be proven in Section \ref{sec:trafo}.
Note, however, that in the case of a non-linear constitutive law the
$\mathcal{G}^{abcd}$ depend not only on $x$ but also on $\vec{D}(x)$
and $\vec{B}(x)$. Following Hehl and Obukhov \cite{HehlObukhov2003}
we call $\mathcal{G}^{abcd}$ the \emph{Tamm-Rubilar} tensor density. 
For an interesting representation of the Tamm-Rubilar tensor density,
using the adjugate (or classical adjoint) of a matrix, see Itin \cite{Itin2009}.

Before discussing the (reduced) characteristic equation in more detail, we add a 
word on terminology. We have used the term ``characteristic equation''
which is the standard notation in texts on partial differential equation. In
physics texts one finds the alternative term \emph{dispersion relation}.
Some authors, e.g. Hehl and Obukhov \cite{HehlObukhov2003}, use the term
\emph{Fresnel equation} as another alternative. Traditionally, the term 
``Fresnel equation'' is used in crystal optics for an equation that
determines the index of refraction (or, equivalently, the phase velocity)
in dependence of the spatial direction, see e.g. Born and Wolf
\cite{BornWolf2002}, Sect. 15.2.2. This equation is, indeed, equivalent
to the characteristic equation. 

%-----------------------------------------------------------------------------
\section{Hyperbolicity of the evolution equations}\label{sec:hyperbolicity}
\noindent
In the preceding section we have seen that the characteristic equation 
(\ref{eq:characteristic}), together with the constraints, determines the 
directions $p_adx^a$ in the cotangent space in which wavelike solutions 
can propagate. We will now discuss that the characteristic equation 
(\ref{eq:characteristic}) also contains all information that is necessary 
to decide whether or not the evolution equations determine a well-posed 
initial-value problem.

Recall that we denote by $\omega _A (\vec{p})$, for $A=1, \dots, 6$, the
six eigenvalues of the matrix $p_{\rho} \mathbf{L} ^{\rho}$ which depends
on $x$, $\vec{D}(x)$ and $\vec{B}(x)$. Two of these eigenvalues are zero,
$\omega _5 = \omega _6 = 0$, the other four are, in general, complex.
We now recall some standard terminology from the theory of partial differential
equations. 

The evolution equations are called \emph{hyperbolic} if all eigenvalues
$\omega _A \big( \vec{p} \big)$ are real, for all $\vec{p}$ in $\mathbb{R}^3$.
(This is the case if and only if, in the terminology of G{\aa}rding 
\cite{Garding1959}, the characteristic polynomial $\mathrm{det} 
\big( p_0 \mathbf{1} - p_{\rho} \mathbf{L}^{\rho} \big)$ is hyperbolic 
with respect to the covector $dx^0$.) The evolution equations are
called \emph{strongly hyperbolic} if for each $\vec{p}$ there is an 
invertible matrix $\mathbf{S} ( \vec{p} )$ such that $\mathbf{S} 
(\vec{p} ) ^{-1} p _ {\rho} \mathbf{L} ^{\rho} \mathbf{S} ( \vec{p} )$ 
is symmetric. They are called \emph{symmetric hyperbolic} if this is 
true with an $\mathbf{S}$ that is independent of $\vec{p}$. 
%They are called 
%\emph{strictly hyperbolic} if, for each irreducible factor of the 
%characteristic polynomial $\mathrm{det} \big( p_0 \mathbf{1} - 
%p_{\rho} \mathbf{L}^{\rho} \big)$, the roots are real and pairwise 
%different, for any $\vec{p} \neq \vec{\, 0}$. 
Obviously the following implications hold: symmetric hyperbolic $\Rightarrow$ 
strongly hyperbolic $\Rightarrow$ hyperbolic.

Of course, the answer to the question of whether any of the three 
properties -- hyperbolic, strongly hyperbolic or symmetric hyperbolic 
 -- holds, may vary in dependence of $x$, 
$\vec{D}(x)$ and $\vec{B}(x)$.

Hyperbolicity guarantees the unrestricted existence of 
approximate-plane-wave solutions. However, it is too weak to
guarantee well-posedness of the initial-value problem for the
class of differential equations -- quasilinear
with non-constant coefficients -- to which our 
evolution equations belong.

The latter requires strong hyperbolicity. More precisely, if the
evolution equations are strongly hyperbolic 
at some $x$, $\vec{D}(x)$ and $\vec{B}(x)$, the
following holds true. Data for $\vec{D}$ and $\vec{B}$ on the
hypersurface $x^0= \mathrm{constant}$ that take the prescribed
values $\vec{D}(x)$ and $\vec{B} (x)$ at $x$ determine a 
unique solution to the evolution equations on some neighborhood
of $x$, and the solution depends on the data continuously.
The data must be of Sobolev class $H^s$, for some $s \ge 3$, and
continuity is meant with respect to the Sobolev $H^s$ norm. For 
details and proofs the reader is referred to Taylor \cite{Taylor1991},
Theorem 5.2D. (Note that Taylor uses the term ``symmetrizable'' 
instead of ``strongly hyperbolic''.) 

In the more special case of symmetric hyperbolicity we have not 
only continuous dependence of the solution on the data but in
addition we can control the growth of the solution in
terms of \emph{energy inequalities}.  Usually symmetric
hyperbolicity is easier to check than hyperbolicity or
strong hyperbolicity. In Section \ref{sec:symhyp} below we
will give a convenient characterisation of all constitutive laws that
give symmetric hyperbolic evolution equations. For hyperbolic or
strongly hyperbolic evolution equations, no such characterisation
is known so far.

Lindell, Sihvola and Suchy \cite{LindellSihvolaSuchy1995}, in an 
otherwise very useful article, claim
that the eigenvalues $\omega _A (\vec{p})$ are real whenever the 
constitutive matrix $\mathbf{M}$ is symmetric. This would give a very
convenient sufficient condition for hyperbolicity. Unfortunately,
the claim is wrong; a counter-example is
\begin{equation}\label{eq:exM}
  \mathbf{M} =   
  \begin{pmatrix} 
    \boldsymbol{0} & \boldsymbol{1} \\
    \boldsymbol{1} & \boldsymbol{0} 
  \end{pmatrix}
\end{equation}
for which the matrix $p_{\rho} \mathbf{L}{}^{\rho}$ has  
eigenvectors $\omega _1 (\vec{p}) = \omega _2 (\vec{p}) = i | \vec{p}|$,
$\omega _3 (\vec{p}) = \omega _4 (\vec{p}) = -i | \vec{p}|$ and
$\omega _5 (\vec{p}) = \omega _6 (\vec{p}) = 0$. 
The error comes in eq. (65) of  \cite{LindellSihvolaSuchy1995} where
the authors divide by a quantity without paying attention to the fact
that this quantity may be zero. (Note that Lindell, Sihvola and Suchy 
allow the constitutive matrix to be complex, in contrast to the formalism
used here, and that their $\mathbf{M}$ is our $\mathbf{M}{}^{-1}$. 
In the above argument we specified their reasoning to the case that 
$\mathbf{M}$ is real and we used the obvious fact that an invertible real
matrix is symmetric if and only if its inverse is symmetric.)

%Strict hyperbolicity is even easier to check but it occurs 
%only for very special constitutive laws.

%------------------------------------------------------------------------------  

\section{The light cones}\label{sec:cone}
\noindent
The homogeneity of the characteristic polynomial implies that the functions
$\omega _A$ are positively homogeneous of degree one, 
\begin{equation}\label{eq:poshom}
  \omega _A ( s \vec{p} ) \, = \, s \, \omega _A ( \vec{p} ) \qquad
  \text{for all} \: s > 0 \; .
\end{equation}
We now assume hyperbolicity, and we order the 4 roots of the reduced
characteristic polynomial according to
\begin{equation}\label{eq:order}
  \omega _1 ( \vec{p} )
  \ge \omega _2 ( \vec{p} )
  \ge \omega _3 ( \vec{p} )
  \ge \omega _4 ( \vec{p} )  \; .
\end{equation}
This guarantees that the $\omega _A$ are continuous, but not necessarily 
smooth, functions. Thus, the reduced characteristic equation defines four
connected sets 
\begin{equation}\label{eq:cone}
  \mathcal{C}_A \, = \, \{\; p_a \, dx^a \, | 
  \, p_0 = \omega _A ( \vec{p} ) \; \}\: ,
  \quad A=1, 2, 3, 4 \, ,
\end{equation}
in the (real) cotangent space of the chosen point $x$. Without our assumption
of hyperbolicity one would have to consider the $\mathcal{C}_A $ as subsets
of the complexified cotangent space. 

By (\ref{eq:poshom}), each $\mathcal{C} _A$ is a cone in the sense that it 
is invariant under multiplication with positive real numbers. We refer to 
the $\mathcal{C} _A$ as to the \emph{four branches of the characteristic 
variety} or, shorter, as to the four \emph{light cones}. By differentiating 
the characteristic equation we find immediately that the differential of
$p_0 - \omega _A ( \vec{p} )$ is non-zero at any point of $\mathcal{C}_A$
where $\mathcal{C}_A$ does not meet one of the other light cones.
Thus, $\mathcal{C}_A$ is a 3-dimensional manifold at any such point. At 
an intersection point with some other light cone $\mathcal{C}_B$, however, 
$\mathcal{C}_A$ need not be smooth. For instance, $\mathcal{C}_A$ and 
$\mathcal{C}_B$ may form a ``conical singularity'', with pointed
tips meeting head-on. This gives rise to the phenomenon of ``conical 
refraction'' whose observability has been a matter of vivid debate in 
the history of anisotropic optics; for a detailed discussion see e.g.
Born and Wolf \cite{BornWolf2002}, p. 813--818.

Our ordering (\ref{eq:order}) implies that 
$\mathcal{C}_4$ is the image of $\mathcal{C}_1$ and $\mathcal{C}_3$ is 
the image of $\mathcal{C}_2$ under reflection at the origin in the 
cotangent space. This follows immediately from the fact that the 
characteristic polynomial is homogeneous, Thus, $\mathcal{C}_1 \cup
\mathcal{C}_4$ is a double-cone in the sense that it is generated
by straight lines through the origin, and so is $\mathcal{C}_2
\cup \mathcal{C}_3$. Note, however, that in general there is 
no reflection symmetry with respect to the plane $p_0 = 0$. 

We will now show that the four light cones divide up into two past 
cones and two future cones, provided that the reduced characteristic
polynomial has no zero roots. In the terminology explained in Section 
\ref{sec:characteristic}, the latter condition means that we prohibit 
zero-frequency modes.

\begin{proposition}\label{prop:futpast}
Assume that the four roots of the reduced characteristic polynomial
are real and non-zero for all $\vec{p} \neq \vec{\, 0}$, and that we
order them according to (\ref{eq:order}). Then
\begin{equation}\label{eq:orderstrong}
 \omega _1 ( \vec{p} )  \ge \omega _2 ( \vec{p} )
 > 0 >  \omega _3 ( \vec{p} ) \ge \omega _4 ( \vec{p} )
\end{equation}
holds for all $\vec{p} \neq \vec{\, 0}$.
\end{proposition}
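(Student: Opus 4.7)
The plan is to combine three ingredients: a symmetry of the characteristic polynomial under total sign reversal of all momentum components, the continuity of the ordered real roots under the hyperbolicity assumption, and the connectedness of $\mathbb{R}^3\setminus\{\vec{0}\}$.

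First I would exploit homogeneity. The full characteristic polynomial $\det(p_0\mathbf{1}-p_\rho\mathbf{L}^\rho)$ is homogeneous of degree $6$ in $(p_0,\vec{p})$, so it is invariant under the simultaneous sign flip $(p_0,\vec{p})\mapsto(-p_0,-\vec{p})$. Dividing out the $p_0^2$ factor, the reduced characteristic polynomial $\prod_{A=1}^{4}(p_0-\omega_A(\vec{p}))$ is homogeneous of degree $4$ and inherits the same invariance. Consequently the set of its four roots at $-\vec{p}$ coincides with the set $\{-\omega_A(\vec{p})\}$. Combined with the ordering convention (\ref{eq:order}), this forces the pairing
\begin{equation*}
\omega_1(-\vec{p}) = -\omega_4(\vec{p}),\qquad \omega_2(-\vec{p}) = -\omega_3(\vec{p}).
\end{equation*}

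Next I would note that by hyperbolicity each $\omega_A$ is real, and the ordering convention makes $\omega_2$ a continuous function of $\vec{p}$ (standard fact for the ordered real roots of a monic polynomial whose coefficients depend continuously on a parameter). The standing hypothesis that none of the four roots vanishes on $\mathbb{R}^3\setminus\{\vec{0}\}$ then says that $\omega_2$ is a nowhere-zero continuous function on the connected set $\mathbb{R}^3\setminus\{\vec{0}\}$, hence has constant sign there.

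Finally I would rule out the wrong sign. Suppose $\omega_2(\vec{p})<0$ for all $\vec{p}\neq\vec{0}$. Then by the pairing above $\omega_3(\vec{p})=-\omega_2(-\vec{p})>0$, contradicting $\omega_2(\vec{p})\ge\omega_3(\vec{p})$. Hence $\omega_2(\vec{p})>0$ everywhere, and by the same identity $\omega_3(\vec{p})=-\omega_2(-\vec{p})<0$ everywhere, yielding the strict chain (\ref{eq:orderstrong}). The only mildly delicate point is justifying the continuity of the ordered middle root $\omega_2$; this is standard but worth stating, since without continuity the sign-constancy argument on $\mathbb{R}^3\setminus\{\vec{0}\}$ would not apply.
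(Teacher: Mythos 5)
Your proof is correct and follows essentially the same route as the paper: both rest on the evenness of the homogeneous characteristic polynomial under $(p_0,\vec{p})\mapsto(-p_0,-\vec{p})$, the continuity of the ordered real roots, and the nonvanishing hypothesis on the connected set $\mathbb{R}^3\setminus\{\vec{0}\}$. The only cosmetic difference is that you make the pairing $\omega_2(-\vec{p})=-\omega_3(\vec{p})$ explicit and track the sign of the single middle root, whereas the paper argues by contradiction that three roots cannot share a sign; the content is the same.
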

\begin{proof}
Assume that three of the four roots $\omega _A$ are positive 
at some $\vec{p}$. Owing to the homogeneity of the characteristic polynomial, 
three roots must be negative at $- \vec{p}$. However, as the $\omega _A$ are
continuous on $\mathbb{R}^3 \setminus \{ \vec{\, 0} \}$, this is possible
only if some $\omega _A $ has a zero somewhere on $\mathbb{R}^3 \setminus 
\{ \vec{\, 0} \}$ which contradicts our assumption. We have thus proven that 
it is impossible that three roots are positive. By the same token, it is 
impossible that three roots are negative, so we must have two positive and 
two negative roots, i.e., (\ref{eq:orderstrong}) must be true. 
\end{proof} 

From this proposition we find the following Corollary.

\begin{proposition}\label{prop:futpast2}
Assume that the evolution equations are strongly hyperbolic and that 
the constitutive matrix $\mathbf{M}$ is invertible. Then (\ref{eq:orderstrong})
holds for all $\vec{p} \neq \vec{\, 0}$.
\end{proposition}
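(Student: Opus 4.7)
The plan is to reduce Proposition \ref{prop:futpast2} to the preceding Proposition \ref{prop:futpast}. Since the hypotheses of that earlier proposition are (i) all four roots of the reduced characteristic polynomial are real, and (ii) none of those roots vanishes, it suffices to verify that strong hyperbolicity plus invertibility of $\mathbf{M}$ delivers both conditions.

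First, I would handle reality. By definition, strong hyperbolicity says that for each $\vec{p}$ there is an invertible $\mathbf{S}(\vec{p})$ such that $\mathbf{S}(\vec{p})^{-1}\,p_\rho \mathbf{L}^\rho\, \mathbf{S}(\vec{p})$ is a real symmetric matrix. In particular, $p_\rho \mathbf{L}^\rho$ is similar to a real symmetric matrix, so it is diagonalizable over $\mathbb{R}$ and all six of its eigenvalues $\omega_A(\vec{p})$ are real. This takes care of (i), since the four roots of the reduced characteristic polynomial are precisely the nonzero-multiplicity companions of the two zero eigenvalues guaranteed by (\ref{eq:linfac}).

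Next, for (ii), I would exploit the remark made just after (\ref{eq:linfac}): when $\mathbf{M}$ is invertible, the kernel of $p_\rho \mathbf{L}^\rho$ is exactly two-dimensional, i.e.\ the geometric multiplicity of the eigenvalue $0$ equals $2$. The key observation is that strong hyperbolicity forces $p_\rho \mathbf{L}^\rho$ to be diagonalizable (via the similarity to a symmetric matrix), so algebraic and geometric multiplicities coincide for every eigenvalue. Therefore the algebraic multiplicity of $0$ is exactly $2$, which means the two factors of $p_0$ in (\ref{eq:linfac}) already account for all zero eigenvalues. Equivalently, none of the four $\omega_A(\vec{p})$ appearing in the reduced characteristic polynomial can vanish.

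Having established (i) and (ii), Proposition \ref{prop:futpast} applies verbatim and yields (\ref{eq:orderstrong}) for every $\vec{p}\neq\vec{\,0}$. The only non-routine step is the multiplicity argument in the previous paragraph; the potential pitfall is to forget that the strong-hyperbolicity symmetrizer is allowed to depend on $\vec{p}$, but this dependence is harmless since all we need is diagonalizability at each fixed $\vec{p}$.
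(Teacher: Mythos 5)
Your proposal is correct and takes essentially the same route as the paper: the paper's proof likewise uses invertibility of $\mathbf{M}$ to conclude that the geometric multiplicity of the zero eigenvalue of $p_{\rho}\mathbf{L}^{\rho}$ equals $2$, uses strong hyperbolicity (similarity to a symmetric matrix, hence diagonalizability) to force the algebraic multiplicity to coincide with it, and then reduces to Proposition \ref{prop:futpast}. The only cosmetic difference is that you spell out the reality of the roots explicitly, while the paper relies on the previously stated implication that strong hyperbolicity entails hyperbolicity.
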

\begin{proof}
What we have to prove is that our assumption prohibits zero-frequency modes,
i.e., that $p_0=0$ has algebraic multiplicity equal to 2 as an eigenvalue
of $p_{\rho} \mathbf{L} ^{\rho}$ for all $\vec{p} \neq \vec{0}$.  We have
aleady observed in Section \ref{sec:characteristic} that invertiblity of 
$\mathbf{M}$ guarantees that the geometric multiplicity of $p_0=0$ is 
equal to 2. Strong hyperbolicity makes sure that algebraic and geometric
multiplicity coincide. 
\end{proof} 

Whenever we have hyperbolicity, $\mathcal{C}_1$ and $\mathcal{C}_4$
are the boundaries of \emph{convex} open cones
\begin{equation}\label{eq:defZ}
  \mathcal{Z}_1 \, = \, \{\; p_a \, dx^a \; | 
  \; p_0 > \omega _1 ( \vec{p} ) \; \}\: , \quad
  \mathcal{Z}_4 \, = \, \{\; p_a \, dx^a \; | 
  \; p_0 < \omega _4 ( \vec{p} ) \; \}\: .
\end{equation}
The convexity of $\mathcal{Z}_1$ and $\mathcal{Z}_4$ is a general feature,
following from hyperbolicity, as was already proven in G{\aa}rding's
pioneering paper \cite{Garding1959}. By contrast, $\mathcal{C}_2$ and 
$\mathcal{C}_3$ are not in general the boundaries of convex sets.
%Moreover, the following is true. If, as
%assumed, hyperbolicity with respect to $dx^0$ holds, then hyperbolicity also
%holds with respect to any one-form $p_a dx ^a \in \mathcal{Z}_1$ and any
%one-form $p_a dx ^a \in \mathcal{Z}_4$. Again, this is
%a general feature that follows from hyperbolicity as proven by G{\aa}rding 
%\cite{Garding1959}. 
One may call covectors in $\mathcal{Z}_1$ 
``future-pointing timelike'' and covectors in $\mathcal{Z}_4$ 
``past-pointing timelike''. Of course, future and past interchange 
their roles under reflection of the coordinate $x^0$. The fact 
that the light cones, as subsets of the cotangent space, are independent
of the chosen coordinate system will be proven in Section \ref{sec:trafo}.   

% One may refer to one-forms $p_a dx ^a$
% with $(p_0,p_1,p_2,p_3) \in \mathcal{Z}_1$ as to \emph{future-pointing timelike}
% and to one-forms $p_a dx ^a$ with $(p_0,p_1,p_2,p_3) \in \mathcal{Z}_4$ as 
% to \emph{past-pointing timelike}. 

% We can define the light half-cones $\mathcal{C} _A$ via (\ref{eq:cone}) in the
% non-hyperbolic case as well. However, we will not have four (real) branches that
% extend over all $\vec{p}$. It may happen that some or all $\mathcal{C} _A$ consist 
% of just one point, namely the origin of the cotangent space. The hyperbolic case 
% is characterized by the property that, in each spatial direction $\vec{p}$ in the 
% cotangent space, there are four velocities of light, corresponding to the four light 
% half-cones. According to Proposition \ref{prop:futpast}, two of the four velocities 
% of light correspond to future-oriented propagation and the other two correspond 
% to past-oriented propagation. 

%..........................................................................................
\section{Coordinate transformations}\label{sec:trafo}
\noindent
We have worked in a chosen admissible coordinate system throughout, and we will 
now investigate to what extent the results found are invariant with respect 
to coordinate transformations. In particular, we will verify that the light
cones are invariant (i.e., coordinate-independent).

As the hypersurfaces $x^0 = \mathrm{constant}$ play a distinguished role,
it is useful to distinguish coordinate transformations that leave these
hypersurfaces invariant. The most general such transformation induces on 
each cotangent space a linear transformation of the form
\begin{equation}\label{eq:Galileo1}
  d \tilde{x}{}^{\mu} = 
  a ^{\mu}{}_{\rho} \big( dx^{\rho} + v^{\rho} dx^0 \big)
  \: , \qquad
  d \tilde{x}{}^0 = c \, dx ^0 
\end{equation}
and on each tangent space the dual linear transformation
\begin{equation}\label{eq:Galileo2}
  \frac{\partial}{\partial \tilde{x}{}^{\mu}} =
  b _{\mu}{}^{\rho} \frac{\partial}{\partial x^{\rho}}
  \: , \qquad
  \frac{\partial}{\partial \tilde{x}{}^0} = 
  \, \frac{1}{c} \, \big( \,  
  \frac{\partial}{\partial x^0} -
  v^{\rho} \frac{\partial}{\partial x^{\rho}} 
  \, \big) \: .
\end{equation}
Here $\mathbf{a}=(a^{\mu}{}_{\rho})$ is an invertible $3 \times 3$ matrix,
$\mathbf{b}=(b_{\mu}{}^{\rho})$ is the transpose of its inverse (i.e.
$\mathbf{a} \, \mathbf{b} ^T = \mathbf{b} ^T \mathbf{a} = \mathbf{1}$ 
or, in index notation, $a^{\mu}{}_{\rho} \, b _{\nu}{}^{\rho} = 
b_{\rho}{}^{\mu} a^{\rho}{}_{\nu} = \delta _{\nu}^{\mu}$),
$(v^1,v^2,v^3)$ is a real 3-tuple 
and $c$ is a non-zero real number. We call (\ref{eq:Galileo1})
and (\ref{eq:Galileo2}) a \emph{generalised Galilean transformation}.
It reduces to a standard Galilean transformation if $(a^{\mu}{}_{\rho})$
is orthogonal and $|c| = 1$.  An arbitrary coordinate transformation
induces on the cotangent and tangent spaces linear transformations
that can be written as a generalised Galilean transformation 
(\ref{eq:Galileo1}) and (\ref{eq:Galileo2}) followed by a 
transformation of the form
\begin{equation}\label{eq:change1}
  d \hat{x}{}^{\mu} = d \tilde{x}{}^{\mu}
  \: , \qquad
  d \hat{x}{}^0 = 
  d \tilde{x}{}^0 + u_{\sigma} d \tilde{x}{}^{\sigma} \, ,  
\end{equation}
\begin{equation}\label{eq:change2}
  \frac{\partial}{\partial \hat{x}{}^{\mu}} = 
  \frac{\partial}{\partial \tilde{x}{}^{\mu}} -
  u_{\mu} \, \frac{\partial}{\partial \tilde{x}{}^0}
  \: , \qquad
  \frac{\partial}{\partial \hat{x}{}^0} = 
  \frac{\partial}{\partial \tilde{x}{}^0}  
  \: ,
\end{equation}
where $(u_1, u_2 , u_3)$ is an arbitrary real 3-tuple. 

From the transformation behaviour (\ref{eq:even}) and (\ref{eq:odd})
of field strength and excitation we can calculate the transformation
behaviour of the fields $\vec{E}$, $\vec{B}$, $\vec{\mathcal{H}}$ and $\vec{D}$
as defined in (\ref{eq:EBHD}), and thereupon of the constitutive
matrix whose components are defined by 
(\ref{eq:impermittivity}) and (\ref{eq:impermeability}). With respect 
to generalised Galilean transformations (\ref{eq:Galileo1}) and 
(\ref{eq:Galileo2}), we find 
\begin{equation}\label{eq:GalEHDB}
\begin{split}
  \tilde{E}_{\nu} = \, \frac{1}{c} \, b_{\nu}{}^{\tau} 
  \big( E_{\tau} + v^{\rho} \epsilon _{\rho \tau \sigma} B^{\sigma} \big) 
  \: , \qquad
  \tilde{B}{}^{\mu} = \big( \mathrm{det} ( \mathbf{a} ) \big) ^{-1}
  a^{\mu}{}_{\nu} B^{\nu} \, , \qquad
\\
  \tilde{\mathcal{H}}_{\nu} = \, 
  \frac{\mathrm{det} ( \mathbf{a} )}{|c| \, | \mathrm{det} ( \mathbf{a} )|} \,
  b_{\nu}{}^{\tau} 
  \big( {\mathcal{H}}_{\tau} - v^{\rho} \epsilon _{\rho \tau \sigma} D^{\sigma} \big)
  \: , \qquad
  \tilde{D}{}^{\mu} = \, \frac{c}{|c| \, | \mathrm{det} ( \mathbf{a} ) |} \, 
  a^{\mu}{}_{\nu} D^{\nu} \, ,
\end{split}
\end{equation}
which yields, after some elementary algebra, the following transformation 
rule for the constitutive matrix.
\begin{equation}\label{eq:GalM}
  \begin{pmatrix} 
    \: \boldsymbol{\tilde{\kappa}} \: 
   & 
    \: \boldsymbol{\tilde{\chi}} \:
   \\[0.1cm]
    \: \boldsymbol{\tilde{\gamma}} \:
   & 
    \: \boldsymbol{\tilde{\nu}} \:
  \end{pmatrix}
  \; = \; 
  \frac{\mathrm{det}( \mathbf{a} )}{c} \; 
  \begin{pmatrix} 
    \frac{c \, \mathrm{det}( \mathbf{a} )}{| c \, \mathrm{det}( \mathbf{a} ) |}
    \, \mathbf{b} \, \boldsymbol{\kappa} \mathbf{b} ^T 
   & 
    \quad
    \, \mathbf{b} \, 
    \big( \boldsymbol{\chi} + v^{\sigma} \mathbf{A} _{\sigma} \big) 
    \, 
    \mathbf{b} ^T 
    \quad 
   \\[0.3cm]
    \quad
    \, \mathbf{b} \, 
    \big( \boldsymbol{\gamma} - v^{\sigma} \mathbf{A} _{\sigma} \big) 
    \, 
    \mathbf{b} ^T 
    \quad
   & 
    \frac{c \, \mathrm{det}( \mathbf{a} )}{| c \, \mathrm{det}( \mathbf{a} ) |}
    \, \mathbf{b} \, \boldsymbol{\nu} \mathbf{b} ^T 
  \end{pmatrix}
   \: .
\end{equation}
Here the antisymmetric matrices
$\mathbf{A}{}_{\rho} = \mathbf{A}{}^{\rho}$ are defined by (\ref{eq:Arho}).
By (\ref{eq:GalM}), $\mathbf{\tilde{M}}$ is well-defined
whenever $\mathbf{M}$ is. Hence, a generalised Galilean transformation 
transforms admissible coordinate systems into admissible coordinate systems.

With these results at hand, and with the transformation of the canonical
momentum coordinates
\begin{equation}\label{eq:ptrafo}
  {\tilde{p}}{}_{\mu} \, = \,
  b _{\mu}{}^{\rho} p_{\rho}  \: , \qquad
  \tilde{p}{}_0 \, = \, \frac{1}{c} \,
  \big( p_0 - v^{\rho} p_{\rho} \big) \; , 
\end{equation}
we can now calculate the transformation 
behaviour of the eigenvalues and eigenvectors of $p_{\rho} \mathbf{L} ^{\rho}$.
\begin{proposition}\label{prop:Galeigen}
  Under a generalised Galilean transformation (\ref{eq:Galileo1}) and (\ref{eq:Galileo2}),
  the eigenvalues and eigenvectors (\ref{eq:eigen}) of $p_{\rho} \mathbf{L}^{\rho}$
  transform according to
\begin{equation}\label{eq:Galeigen1}
   {\tilde{\omega}}{}_A (\vec{\tilde{p}}) 
   \, = \, 
   \frac{1}{c} \, 
   \big( \,
   \omega _A (\vec{p}) - v^{\rho} p_{\rho} 
   \, \big) \; ,
\end{equation}
\begin{equation}\label{eq:Galeigen2}
  \begin{pmatrix}
    \vec{\tilde{d}}{}_A ( \vec{\tilde{p}} ) 
  \\[0.2cm]
    \vec{\tilde{b}}_A ( \vec{\tilde{p}} ) 
  \end{pmatrix}
  \, = \, \mathrm{det} ( \mathbf{a} )^{-1} \, 
  \begin{pmatrix} 
    \mathbf{a} &  \mathbf{0} 
  \\[0.3cm]
    \mathbf{0} &  \mathbf{a}  
  \end{pmatrix}
  \, 
  \begin{pmatrix} 
    \frac{c \, \mathrm{det}( \mathbf{a} )}{| c \, \mathrm{det}( \mathbf{a} ) |}
    \, \vec{\, d}_A ( \vec{p} ) 
  \\[0.2cm] 
     \vec{\, b}_A ( \vec{p} ) 
  \end{pmatrix} 
\end{equation}
  for $A=1,2,3,4$.
\end{proposition}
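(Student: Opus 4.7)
The plan is to reduce the claim to two elementary transformation statements: the oscillatory amplitude $(\vec d,\vec b)$ in the approximate-plane-wave ansatz (\ref{eq:wave}) transforms in exactly the same way as the background $(\vec D,\vec B)$ does under a generalised Galilean transformation, and the eikonal function $S$ is a scalar on $M$ whose gradient transforms covariantly. The eigenvalue equation (\ref{eq:eigen}) is just the $N=0$ asymptotic-solution condition for (\ref{eq:wave}), so the coordinate-covariance of Maxwell's equations (\ref{eq:maxwell}) combined with these two observations forces the transformation formulas (\ref{eq:Galeigen1}) and (\ref{eq:Galeigen2}).

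For (\ref{eq:Galeigen2}), I would start from an eigenpair $\bigl(\omega_A(\vec p),(\vec d_A,\vec b_A)\bigr)$ of $p_\rho\mathbf L^\rho$ in the $x$-coordinate system and associate with it an approximate-plane-wave family whose eikonal gradient is $(p_0,\vec p)$ and whose leading amplitude is $(\vec d_A,\vec b_A)$. Because the transformation law (\ref{eq:GalEHDB}) for $\vec D,\vec B$ is linear and acts identically on the background and on the $O(\alpha)$ correction, the corresponding amplitude in the $\tilde x$-frame is obtained by applying that same linear map to $(\vec d_A,\vec b_A)$. Rearranging gives precisely (\ref{eq:Galeigen2}); a short algebraic check confirms that the prefactor of $\vec d_A$ is $c/(|c||\det\mathbf a|)$ and that of $\vec b_A$ is $1/\det\mathbf a$, which are exactly the factors displayed in (\ref{eq:GalEHDB}).

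For (\ref{eq:Galeigen1}), I would use that $S$ is a scalar function, so $\tilde S(\tilde x)=S(x)$ and its gradient transforms as (\ref{eq:ptrafo}). Setting $p_0=\omega_A(\vec p)$ and reading off $\tilde p_0$ immediately yields the right-hand side of (\ref{eq:Galeigen1}). To identify this quantity with an eigenvalue of $\tilde p_\rho\tilde{\mathbf L}^\rho$, one notes that the transformed family is still an approximate-plane-wave family in admissible $\tilde x$-coordinates (this uses the fact, observed below (\ref{eq:GalM}), that generalised Galilean transformations preserve admissibility) and is still an asymptotic solution of Maxwell's equations, by covariance of (\ref{eq:maxwell}). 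The characterisation of $N=0$ asymptotic solutions established in Section \ref{sec:planewave} then gives $\tilde p_0=\tilde\omega_A(\vec{\tilde p})$, completing (\ref{eq:Galeigen1}).

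The main obstacle is organisational rather than conceptual: if one prefers a direct algebraic verification over the covariance argument, one must combine the transformation rule (\ref{eq:GalM}) for $\mathbf M$ with the behaviour of the curl-like matrices $\mathbf A^\rho$ under the spatial linear map $\mathbf a$, and check that all $\det(\mathbf a)$, $|c|$, and sign factors cancel cleanly as dictated by the structure of (\ref{eq:Lrho}). A minor caveat concerns labelling: the ordering (\ref{eq:order}) need not be preserved by the transformation, so the proposition should be read with the understanding that for each $A$ there is a corresponding index on the $\tilde x$ side; this matching is unambiguous on each connected component of $\{\vec p\ne\vec 0\}$ where the relevant eigenvalues do not cross.
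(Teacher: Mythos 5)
There is a genuine gap, and it sits at the heart of the argument: you never invoke the transversality conditions (\ref{eq:constraintsp}), i.e.\ Proposition \ref{prop:transverse}, yet the claim is false without them. The eigenvalue equation (\ref{eq:eigen}) is the $N=0$ asymptotic condition for the \emph{evolution equations only}, and the split into evolution equations and constraints is \emph{not} preserved by a generalised Galilean transformation with $v^{\rho} \neq 0$: since $\partial / \partial \tilde{x}{}^0 = \frac{1}{c}\big( \partial_0 - v^{\rho}\partial_{\rho}\big)$, the twiddled evolution equations are linear combinations of the untwiddled evolution equations \emph{and} the untwiddled constraints. So covariance of the full Maxwell system (\ref{eq:maxwell}) lets you conclude the twiddled eigenvalue equation only if your approximate-plane-wave family is an asymptotic solution of the constraints as well, i.e.\ only if $p_{\rho} d_A^{\rho} = p_{\rho} b_A^{\rho} = 0$; an arbitrary eigenpair of $p_{\rho}\mathbf{L}^{\rho}$ does not satisfy this, and it is exactly Proposition \ref{prop:transverse} (valid for $\omega_A \neq 0$) that supplies it. That the hypothesis cannot be dropped is shown by the zero modes $A=5,6$: zero is always an eigenvalue of $\tilde{p}{}_{\rho}\tilde{\mathbf{L}}{}^{\rho}$ with multiplicity $\ge 2$, whereas $\frac{1}{c}(0 - v^{\rho}p_{\rho}) \neq 0$, so (\ref{eq:Galeigen1}) fails for eigenvectors violating the constraints. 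The same ingredient is indispensable in the ``direct algebraic verification'' that you dismiss as merely organisational: by (\ref{eq:GalM}) the off-diagonal blocks acquire cross-terms $\boldsymbol{\chi} + v^{\sigma}\mathbf{A}_{\sigma}$ and $\boldsymbol{\gamma} - v^{\sigma}\mathbf{A}_{\sigma}$, and these do \emph{not} cancel through $\det(\mathbf{a})$, $|c|$ and sign bookkeeping; rather one uses $p_{\rho}\mathbf{A}^{\rho}\, v^{\sigma}\mathbf{A}_{\sigma}\, \vec{u} = \vec{v}\,( \vec{p}\cdot\vec{u}) - (\vec{p}\cdot\vec{v})\,\vec{u}$, and only transversality kills the first term, leaving precisely the scalar shift $-v^{\rho}p_{\rho}$ that produces (\ref{eq:Galeigen1}). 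This is in fact the paper's proof: it computes $\tilde{p}{}_{\rho}\tilde{\mathbf{L}}{}^{\rho}$ from (\ref{eq:GalM}) and (\ref{eq:ptrafo}) and then explicitly appeals to Proposition \ref{prop:transverse} to absorb the $v^{\sigma}\mathbf{A}_{\sigma}$ terms.

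A secondary defect: the $\omega_A(\vec{p})$ for $A=1,\dots,4$ are in general complex, and the proposition (like the paper's algebraic proof) covers that case; your route via approximate-plane-wave families requires a \emph{real} eikonal function $S$ with $dS = p_a dx^a$, so it applies only to real characteristic covectors and would need a separate extension (e.g.\ by noting that the resulting identity between the two sides of (\ref{eq:eigen}) is algebraic in $p$) to reach complex roots. Your remaining points are sound — the check that the prefactors $c/(|c||\det \mathbf{a}|)$ and $(\det\mathbf{a})^{-1}$ in (\ref{eq:Galeigen2}) reproduce (\ref{eq:GalEHDB}) is correct, and the caveat about the labelling of eigenvalue branches is legitimate — but as written the proposal proves the proposition neither for complex roots nor, more seriously, for real ones, because the constraint/transversality input is missing at the one place where the boost term $v^{\rho}p_{\rho}$ must be generated.
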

\begin{proof}
The transformation behaviour (\ref{eq:GalM}) of the constitutive 
matrix, together with (\ref{eq:ptrafo}), allows us to calculate 
the transformed matrix  $\tilde{p}{}_{\rho} \tilde{\mathbf{L}}{}^{\rho}$, 
\begin{equation}\label{eq:GalpL}
  {\begin{pmatrix} 
    \mathbf{a} &  \mathbf{0} \\
    \mathbf{0} & \mathbf{a}
  \end{pmatrix}}^{-1}
  \,
  \tilde{p}{}_{\rho} \, {\tilde{\mathbf{L}}}{}^{\rho}
  \, 
  \begin{pmatrix} 
    \mathbf{a} &  \mathbf{0} \\
    \mathbf{0} & \mathbf{a}
  \end{pmatrix}
  \, = \, 
  \frac{1}{c} \, p_{\rho} \, 
  \begin{pmatrix} 
    \mathbf{0} &  - \mathbf{A}^{\rho} \\
    \mathbf{A}^{\rho} & \mathbf{0} 
  \end{pmatrix}
   \;
  \begin{pmatrix} 
    \frac{c \, \mathrm{det}( \mathbf{a} )}{| c \, \mathrm{det}( \mathbf{a} ) |}
    \, \boldsymbol{\kappa} 
   & 
    \quad
    \boldsymbol{\chi} + v^{\sigma} \mathbf{A} _{\sigma} 
    \quad 
   \\[0.3cm]
    \quad
    \boldsymbol{\gamma} - v^{\sigma} \mathbf{A} _{\sigma}  
    \, 
    \quad
   & 
    \frac{c \, \mathrm{det}( \mathbf{a} )}{| c \, \mathrm{det}( \mathbf{a} ) |}
    \, \boldsymbol{\nu} 
  \end{pmatrix}
  \; .
\end{equation}
Now assume that (\ref{eq:eigen}) is true. This eigenvalue equation can be 
equivalently rewritten as
\begin{equation}\label{eq:eigentrans}
%\begin{split}
  p_{\rho} \,
  \begin{pmatrix} 
    \mathbf{0} &  - \mathbf{A}^{\rho} \\
    \mathbf{A}^{\rho} & \mathbf{0} 
  \end{pmatrix}
  \begin{pmatrix} 
    \frac{c \, \mathrm{det}( \mathbf{a} )}{| c \, \mathrm{det}( \mathbf{a} ) |}
    \, \boldsymbol{\kappa} 
   & 
    \quad
    \boldsymbol{\chi}  
    \quad 
   \\[0.3cm]
    \quad
    \boldsymbol{\gamma}  
    \, 
    \quad
   & 
    \frac{c \, \mathrm{det}( \mathbf{a} )}{| c \, \mathrm{det}( \mathbf{a} ) |}
    \, \boldsymbol{\nu} 
  \end{pmatrix}
  \,
  \begin{pmatrix} 
    \frac{c \, \mathrm{det}( \mathbf{a} )}{| c \, \mathrm{det}( \mathbf{a} ) |}
    \, \vec{\, d}_A ( \vec{p} ) 
  \\[0.2cm] 
    \vec{\, b}_A ( \vec{p} ) 
  \end{pmatrix} 
%\\[0.3cm]
   \; = \;
  \omega _A ( \vec{p})  \, 
  \begin{pmatrix} 
    \frac{c \, \mathrm{det}( \mathbf{a} )}{| c \, \mathrm{det}( \mathbf{a} ) |}
    \, \vec{\, d}_A ( \vec{p} ) 
  \\[0.2cm] 
    \vec{\, b}_A ( \vec{p} ) 
  \end{pmatrix} \; . 
%\qquad \qquad
%\end{split} 
\end{equation}
With ${\tilde{\omega}}{}_A (\vec{\tilde{p}})$, $\vec{\tilde{d}}{}_A ( \vec{\tilde{p}} )$ 
and $\vec{\tilde{b}}_A ( \vec{\tilde{p}} )$ introduced by (\ref{eq:Galeigen1}) and 
(\ref{eq:Galeigen2}), we find from (\ref{eq:GalpL}) and (\ref{eq:eigentrans}), with 
the help of Propositon \ref{prop:transverse}, that (\ref{eq:eigen}) holds with all 
terms twiddled.
\end{proof}

From this proposition we read that 
\begin{equation}\label{eq:invcone}
\tilde{p}{}_{\rho} \, d \tilde{x}{}^{\rho}
\, + \, {\tilde{\omega}}{}_A (\vec{\tilde{p}}) \, d \tilde{x}{}^0
\, = \, 
p_{\rho} \, dx^{\rho} \, + \, \omega _A (\vec{p}) \, dx^0
\end{equation}
which demonstrates that the light cones are invariant. The reduced characteristic
polynomial transforms as
\begin{equation}\label{eq:Galpol}
 \prod _{A=1} ^4 \big( \tilde{p}{}_0 - \tilde{\omega}{} _A (\vec{\tilde{p}} ) \big)
  \; = \; \frac{1}{c^4} \;
 \prod _{A=1} ^4 \big( p_0 - \omega _A (\vec{p} ) \big) \, .
\end{equation}
With
\begin{equation}\label{eq:Galkappa}
\mathrm{det} ( \boldsymbol{\tilde{\kappa}} )
\; = \;
\frac{| \mathrm{det}( \mathbf{a} ) |}{|c|} \, 
\mathrm{det}( \boldsymbol{\kappa}) \; ,
\end{equation}
which follows from (\ref{eq:GalM}), this implies
\begin{equation}\label{eq:Galpol2}
 \frac{1}{\mathrm{det}(\boldsymbol{\tilde{\kappa}})} \,
 \prod _{A=1} ^4 \big( \tilde{p}{}_0 - \tilde{\omega}{} _A (\vec{\tilde{p}} ) \big)
  \; = \; \frac{1}{|c \, \mathrm{det}( \mathbf{a})|} \; 
  \frac{1}{\mathrm{det} (\boldsymbol{\kappa})} \;
  \prod _{A=1} ^4 \big( p_0 - \omega _A (\vec{p} ) \big) \, .
\end{equation}
This shows that under generalised Galilean transformations the $\mathcal{G}^{abcd}$,
introduced in (\ref{eq:TR}), transform as a tensor density of weight 1,
\begin{equation}\label{eq:GalTR}
\tilde{\mathcal{G}}{}^{abcd} \, \tilde{p}{}_a \, \tilde{p}{}_b
\, \tilde{p}{}_c\, \tilde{p}{}_d
\; = \;
\Big| \, \mathrm{det} \Big( \frac{\partial x}{\partial \tilde{x}} \Big) \, \Big| \;
\mathcal{G}^{abcd} \, p_a \, p_b \, p_c \, p_d \; .
\end{equation}

We now turn to transformations of the form (\ref{eq:change1}) and 
(\ref{eq:change2}). The fields change according to
\begin{equation}\label{eq:changeEHDB}
\begin{split}
  \hat{E}_{\nu} = \tilde{E}_{\nu}
  \: , \qquad
  \hat{B}{}^{\mu} = 
  \tilde{B}{}^{\mu} + 
  \epsilon^{\mu \nu \sigma} u_{\nu} \tilde{E}_{\sigma} \, , \:
\\
  \hat{\mathcal{H}}_{\nu} = \tilde{\mathcal{H}}_{\nu} 
  \: , \qquad
  \hat{D}{}^{\mu} = 
  \tilde{D}{}^{\mu} - 
  \epsilon^{\mu \nu \sigma} u_{\nu} \tilde{\mathcal{H}}_{\sigma} \, ,
\end{split}
\end{equation}
where $\epsilon ^{\mu \nu \sigma}$ is the contravariant Levi-Civita
symbol, defined by the properties that it is totally antisymmetric
and satisfies $\epsilon ^{123} = 1$.
From this we find the following transformation behaviour of
the constitutive matrix.
\begin{equation}\label{eq:changeM}
  \begin{pmatrix} 
    \: \boldsymbol{\hat{\kappa}} \: 
   & 
    \: \boldsymbol{\hat{\chi}} \:
   \\[0.1cm]
    \: \boldsymbol{\hat{\gamma}} \:
   & 
    \: \boldsymbol{\hat{\nu}} \:
  \end{pmatrix}
  = \, 
  \begin{pmatrix} 
    \: \boldsymbol{\tilde{\kappa}} \: 
   & 
    \: \boldsymbol{\tilde{\chi}} \:
   \\[0.1cm]
    \: \boldsymbol{\tilde{\gamma}} \:
   & 
    \: \boldsymbol{\tilde{\nu}} \:
  \end{pmatrix}
    \; \left( \;
  \begin{pmatrix} 
    \: \mathbf{1} \: 
   & 
    \: \mathbf{0} \:
   \\[0.1cm]
    \: \mathbf{0} \:
   & 
    \: \mathbf{1} \:
  \end{pmatrix}
   \, - \, u_{\rho}
  \, 
  \begin{pmatrix} 
    \: \mathbf{0} \: 
   & 
    \: - \mathbf{A} ^{\rho} \:
   \\[0.1cm]
    \: \mathbf{A} ^{\rho} \:
   & 
    \: \mathbf{0} \:
  \end{pmatrix}
   \;
  \begin{pmatrix} 
    \: \boldsymbol{\tilde{\kappa}} \: 
   & 
    \: \boldsymbol{\tilde{\chi}} \:
   \\[0.1cm]
    \: \boldsymbol{\tilde{\gamma}} \:
   & 
    \: \boldsymbol{\tilde{\nu}} \:
  \end{pmatrix}
  \, \right) ^{-1} \: .
\end{equation}
Thus, a coordinate transformation of the form (\ref{eq:change1}) and 
(\ref{eq:change2}) maps admissible coordinates into admissible 
coordinates if and only if the inverse matrix on the right-hand side
of (\ref{eq:changeM}) exists. This is true for almost all values of
$(u_1, u_2, u_3)$, namely whenever $d \tilde{x}{} ^0+u_{\rho}
d \tilde{x}{}^{\rho}$ is
non-characteristic.
 
From (\ref{eq:changeM}) and the transformation
behaviour of the momentum coordinates,
\begin{equation}\label{eq:changep}
  \hat{p}{}_{\mu} = 
  \tilde{p}{}_{\mu} -
  u_{\mu} \,  \tilde{p}{}_0
  \: , \qquad
  \hat{p}{}_0 = 
  \tilde{p}{}_0  
  \: ,
\end{equation}
we find the transformation behaviour of the 
characteristic matrix,
\begin{equation}\label{eq:changechar}
  \hat{p}{}_0 \mathbf{1} \, - \, 
  \hat{p}{}_{\rho} \mathbf{\hat{L}}{}^{\rho}
  \; = \;
  \big( \, 
  \tilde{p}{}_0 \mathbf{1} \, - \, 
  \tilde{p}{}_{\rho} \mathbf{\tilde{L}}{}^{\rho}
  \big) \;
  \big(
  \mathbf{1} \, - \, 
  u_{\sigma} \mathbf{\tilde{L}}{}^{\sigma}
  \big) ^{-1}
  \; .
\end{equation}
This demonstrates that the characteristic equations
$\mathrm{det}(  \hat{p}{}_0 \mathbf{1} \, - \, 
\hat{p}{}_{\rho} \mathbf{\hat{L}}{}^{\rho} ) =0$ 
and $\mathrm{det}( \tilde{p}{}_0 \mathbf{1} \, - \, 
\tilde{p}{}_{\rho} \mathbf{\tilde{L}}{}^{\rho} )=0$ 
are equivalent, i.e., that the light cones
are invariant with respect to coordinate transformations
of the form (\ref{eq:change1}) and (\ref{eq:change2}).
To verify the transformation behaviour of the Tamm-Rubilar 
tensor density, we assume that $\boldsymbol{\tilde{\kappa}}$
and $\mathbf{\tilde{M}}$ are invertible. Then (\ref{eq:changeM})
takes the form
\begin{equation}\label{eq:changeinv}
  \mathbf{\hat{M}}{}^{-1} \; = \; \mathbf{\tilde{M}}{}^{-1} \; - \;
  u_{\rho} \,
  \begin{pmatrix} 
    \: \mathbf{0} \: 
   & 
    \: - \mathbf{A} ^{\rho} \:
   \\[0.1cm]
    \: \mathbf{A} ^{\rho} \:
   & 
    \: \mathbf{0} \: 
  \end{pmatrix}
  \: .
\end{equation}
After calculating the inverse matrices $\mathbf{\tilde{M}}{}^{-1}$ and 
$\mathbf{\hat{M}}{}^{-1}$ with the help of (\ref{eq:transM}), the lower 
right-hand block of (\ref{eq:changeinv}) yields
\begin{equation}\label{eq:comp}
\boldsymbol{\hat{\nu}} - \boldsymbol{\hat{\gamma}}
\boldsymbol{\hat{\kappa}}{}^{-1} \boldsymbol{\hat{\chi}}
\; = \; 
\boldsymbol{\tilde{\nu}} - \boldsymbol{\tilde{\gamma}}
\boldsymbol{\tilde{\kappa}}{}^{-1} \boldsymbol{\tilde{\chi}} \; ,
\end{equation}
which, by (\ref{eq:detM}), implies
\begin{equation}\label{eq:compdet}
\mathrm{det}(\mathbf{\hat{M}}) \; \mathrm{det}(\tilde{\kappa})
\; = \; 
\mathrm{det}(\mathbf{\tilde{M}}) \; \mathrm{det}(\hat{\kappa})
\; .
\end{equation}
On the other hand, comparison of (\ref{eq:changeM}) and  
(\ref{eq:changechar}) yields
\begin{equation}\label{eq:transdet}
\mathrm{det}(  \hat{p}{}_0 \mathbf{1} \, - \, 
  \hat{p}{}_{\rho} \mathbf{\hat{L}}{}^{\rho})
  \; 
  \mathrm{det}(\mathbf{\tilde{M}}) 
  \; = \; 
  \mathrm{det}(  \tilde{p}{}_0 \mathbf{1} \, - \, 
  \tilde{p}{}_{\rho} \mathbf{\tilde{L}}{}^{\rho} )
  \; \mathrm{det}(\mathbf{\hat{M}}) 
\end{equation}
and thus, after dividing by $\hat{p}{}_0^2 = \tilde{p}{}_0^2$ and
using (\ref{eq:compdet}),
\begin{equation}\label{eq:changepol}
 \prod _{A=1} ^4 \big( \hat{p}{}_0 - \hat{\omega}{} _A (\vec{\hat{p}} ) \big)
  \; \mathrm{det}(\mathbf{\tilde{\kappa}}) 
 \; = \; 
 \prod _{A=1} ^4 \big( \tilde{p}{}_0 - \tilde{\omega}{} _A (\vec{\tilde{p}} ) \big)
  \; \mathrm{det}(\mathbf{\hat{\kappa}}) \; . 
\end{equation}
This shows that, also with respect to transformations (\ref{eq:change1}) and
(\ref{eq:change2}), the $\mathcal{G}^{abcd}$ of (\ref{eq:TR}) transform as 
a tensor density of weight 1,
\begin{equation}\label{eq:changeTR}
\hat{\mathcal{G}}{}^{abcd} \, \hat{p}{}_a \, \hat{p}{}_b
\, \hat{p}{}_c\, \hat{p}{}_d
\; = \;
\Big| \, \mathrm{det} \Big( \frac{\partial \tilde{x}}{\partial \hat{x}} 
\Big) \, \Big| \;
\tilde{\mathcal{G}}{}^{abcd} \, \tilde{p}{}_a \, \tilde{p}{}_b \, 
\tilde{p}{}_c \, \tilde{p}{}_d \; ,
\end{equation}
where in this case the determinant on the right-hand side is equal to 1.

\section{Calculating the roots of the characteristic equation}\label{sec:roots}
\noindent
In this section we will discuss the question of how to calculate the roots
of the characteristic equation if the constitutive matrix is given.

$\omega _1 (\vec{p})$, $\omega _2 (\vec{p})$, $\omega _3 (\vec{p})$ and 
$\omega _4 (\vec{p})$ are, together with $\omega _5 (\vec{p})=\omega _6 (\vec{p})=0$, 
the six eigenvalues of the matrix $p_{\rho} \mathbf{L}^{\rho}$. In general
they are complex. Now the sum of all eigenvalues is the trace, the
sum of the squares of the eigenvalues is the trace of the square, and so on. 
(This is obvious in the case of a diagonizable matrix. It is also true in 
general, as can be seen from the Jordan decomposition theorem.) Thus,
if we define
\begin{equation}\label{eq:traceL}
  L^{\rho_1 \cdots \rho _i} = \mathrm{trace} \big( 
  \mathbf{L} ^{(\rho _1} \cdots \mathbf{L} ^{\rho _i)} \big) \: ,
\end{equation}
where round brackets around indices mean symmetrization, we get 
\begin{equation}\label{eq:4omega}
\begin{split}
  p_{\rho} L^{\rho}  & =   
  \omega _1 (\vec{p}) + \omega _2 (\vec{p}) + \omega _3 (\vec{p}) + \omega _4 (\vec{p})
   \: , \\ 
  p_{\rho} p _ {\sigma} L^{\rho \sigma}  & =  
  \omega _1 (\vec{p})^2 + \omega _2 (\vec{p})^2 + \omega _3 (\vec{p})^2 + \omega _4 (\vec{p})^2
  \: , \\ 
  p_{\rho} p _ {\sigma} p_ {\tau} L^{\rho \sigma \tau}  & =  
  \omega _1 (\vec{p})^3 + \omega _2 (\vec{p})^3 + \omega _3 (\vec{p})^3 + \omega _4 (\vec{p})^3
  \: , \\ 
  p_{\rho} p _ {\sigma} p _ {\tau} p_{\lambda} L^{\rho \sigma \tau \lambda}  & =  
  \omega _1 (\vec{p})^4 + \omega _2 (\vec{p})^4 + \omega _3 (\vec{p})^4 + \omega _4 (\vec{p})^4
  \: . 
\end{split}
\end{equation}
If the constitutive matrix is known, $\mathbf{L}^{\rho}$ can be calculated from (\ref{eq:Lrho})
and the $L^{\rho_1 \cdots \rho _i}$ can be calculated from (\ref{eq:traceL}). Then, the four
equations (\ref{eq:4omega}) of (maximal) order four determine the four roots $\omega _A
(\vec{p})$. In general, solving fourth-order equations leads to rather awkward expressions.
For several special cases, however, this method allows to calculate the roots of the characteristic
equation in a convenient way, as will be demonstrated below.

The transformation behaviour (\ref{eq:Galeigen1}) of the $\omega _A (\vec{p} )$ implies 
the following transformation behaviour of the $L^{\rho_1 \cdots \rho _i}$ under
generalised Galilean transformations (\ref{eq:Galileo1}) and (\ref{eq:Galileo2}).
\begin{equation}\label{eq:4GalL}
\begin{split}
  c \, \tilde{p}{}_{\rho} \tilde{L}{}^{\rho}  & =   
  p_{\rho} \big(  L^{\rho} + 4 v^{\rho} \big)
   \: , \\ 
  c^2 \, \tilde{p}{}_{\rho} \tilde{p}{}_ {\sigma} \tilde{L}{}^{\rho \sigma}  & =  
  p_{\rho} p_{\sigma} \big( L^{\rho \sigma} + 2 L^{\rho} v^{\sigma} 
  + 4 v^{\rho} v^{\sigma} \big) 
  \: , \\ 
  c^3 \, \tilde{p}{}_{\rho} \tilde{p}{}_ {\sigma} \tilde{p}{}_ {\tau} 
  \tilde{L}{}^{\rho \sigma \tau}  & =  
  p_{\rho} p_{\sigma} p_{\tau} \big( L^{\rho \sigma \tau} + 
  3 L^{\rho \sigma} v^{\tau} + 3 L^{\rho} v^{\sigma} v^{\tau} + 
  4 v^{\rho} v^{\sigma} v^{\tau} \big)
  \: , \\ 
  c^4 \, \tilde{p}{}_{\rho} \tilde{p}{}_ {\sigma} \tilde{p}{}_ {\tau} \tilde{p}{}_{\lambda} 
  \tilde{L}{}^{\rho \sigma \tau \lambda}  & =  
  p_{\rho} p_{\sigma} p_{\tau} p_{\lambda} \big( L^{\rho \sigma \tau \lambda} 
  + 4 L^{\rho \sigma \tau} v^{\lambda} + 6 L^{\rho \sigma} v^{\tau} v^{\lambda} 
  + 4 L^{\rho} v ^{\sigma} v^{\tau} v^{\lambda} + 
  4 v^{\rho} v^{\sigma} v^{\tau} v^{\lambda} \big)
  \: . 
\end{split}
\end{equation}
As an alternative, one can derive (\ref{eq:4GalL}) by multiplying each side of
(\ref{eq:GalpL}) sufficiently often with itself and then calculating the 
trace. However, this is much more tedious than using (\ref{eq:Galeigen1}).
 
From the first equation of (\ref{eq:4GalL}) we read that, by a generalised Galilean transformation
with $v^{\rho} = -\frac{1}{4} L^{\rho}$, it is always possible to transform $L^{\rho}$ to 
$\tilde{L}{}^{\rho} = 0$. Also, we read from (\ref{eq:4GalL}) that with respect to 
purely spatial transformations ($v^{\rho} = 0$, $c=1$) the $L^{\rho_1 \cdots \rho _i}$ 
behave as contravariant tensor components.

The characteristic equation is uniquely determined by the coefficients $L^{\rho}, L^{\rho \sigma},
L^{\rho \sigma \tau}, L^{\rho \sigma \tau \lambda}$. As they are totally symmetric, these are
$(3+6+10+15)=34$ independent real numbers. On the other hand, the constitutive matrix has $(6 \times 6)
= 36$ independent components. From this observation it follows that different constitutive matrices
must yield the same characteristic equation. It is an interesting and important problem to
find a necessary and sufficient condition for two constitutive matrices to give the same characteristic
equation. This problem is unsolved so far; however, in the next section we will find a partial
answer by determining a group action on the set of all constitutive matrices that leaves the 
characteristic equation invariant.

%..........................................................................................
\section{$SL(2 ,\mathbb{R})$ action on constitutive matrices}\label{sec:SL2R}
\noindent
We consider the group $SL(2, \mathbb{R})$ in terms of its natural representation
by $6 \times 6$ matrices,
\begin{equation}\label{eq:SL2R}
SL ( 2 , \mathbb{R} ) \, = \, \Big\{ \, 
\begin{pmatrix}
\, a \, \mathbf{1} \, & \, b \, \mathbf{1} \,
\\
\, c \, \mathbf{1} \, & \, d \, \mathbf{1} \,  
\end{pmatrix}
\: \Big| \: ad-bc = 1 \: \Big\} \; .
\end{equation}
This group acts on the set of all real $6 \times 6$ matrices
by conjugation, i.e., each element $\mathbf{Q} \in SL(2 , \mathbb{R} )$
maps each constitutive matrix $\mathbf{M}$ onto $\mathbf{M'} =
\mathbf{Q} ^T \mathbf{M} \mathbf{Q}$. In terms of $ 3 \times 3$ blocks,
the group action reads
\begin{equation}\label{eq:MM'}
\begin{pmatrix}
\, \boldsymbol{\kappa} \, & \, \boldsymbol{\chi} \,
\\
\, \boldsymbol{\gamma} \, & \, \boldsymbol{\nu} \,  
\end{pmatrix}
\: \longmapsto \:
\begin{pmatrix}
\, \boldsymbol{\kappa '} \, & \, \boldsymbol{\chi '} \,
\\
\, \boldsymbol{\gamma '} \, & \, \boldsymbol{\nu '} \,  
\end{pmatrix}
\; = \; 
\begin{pmatrix}
\, a^2 \boldsymbol{\kappa} + ac (\boldsymbol{\chi} + \boldsymbol{\gamma} )
+ c^2 \boldsymbol{\nu} \: & 
\: ab \boldsymbol{\kappa} + ad \boldsymbol{\chi} + bc \boldsymbol{\gamma} 
+ cd \boldsymbol{\nu} \,
\\[0.2cm]
\, ab \boldsymbol{\kappa} + bc \boldsymbol{\chi} + ad \boldsymbol{\gamma} 
+ cd \boldsymbol{\nu} \: & 
\: b^2 \boldsymbol{\kappa} + bd (\boldsymbol{\chi} + \boldsymbol{\gamma} )
+ d^2 \boldsymbol{\nu} \,
\end{pmatrix} \: .
\end{equation}
It is obvious that this group action leaves the determinant invariant,
$\mathrm{det} ( \mathbf{M'} )=\mathrm{det} ( \mathbf{M} ) $, that it maps symmetric 
matrices onto symmetric matrices and that it preserves the difference of the 
off-diagonal blocks, $\boldsymbol{\chi '} - \boldsymbol{\gamma '}=
\boldsymbol{\chi} - \boldsymbol{\gamma}$.
The following calculation shows that the group action leaves the 
characteristic equation invariant.
\begin{gather}
\nonumber
\mathrm{det} \Big( \; p_0 \, \mathbf{1} \, - \, 
\begin{pmatrix}
\mathbf{0} & \; - \, p_{\rho} \mathbf{A} ^{\rho} \;
\\
\: p_{\rho} \mathbf{A} ^{\rho} \: & \mathbf{0}
\end{pmatrix}
\: \mathbf{Q} ^T \; \mathbf{M} \; \mathbf{Q} \: \Big)
\; = 
\\
\label{eq:charinv}
\mathrm{det} \Big( \; p_0 \, \mathbf{1} \, - \, 
\mathbf{Q} \: 
\begin{pmatrix}
\mathbf{0} & \; - \, p_{\rho} \mathbf{A} ^{\rho} \;
\\
\: p_{\rho} \mathbf{A} ^{\rho} \: & \mathbf{0}
\end{pmatrix}
\: \mathbf{Q} ^T \; \mathbf{M} \; \Big)
\; = 
\\
\nonumber
\mathrm{det} \Big( \; p_0 \, \mathbf{1} \, - \, 
\begin{pmatrix}
\mathbf{0} & \; - \, p_{\rho} \mathbf{A} ^{\rho} \;
\\
\: p_{\rho} \mathbf{A} ^{\rho} \: & \mathbf{0}
\end{pmatrix}
\; \mathbf{M} \; \Big) \: .
\end{gather}
In the first step we have used the Sylvester identity according
to which $\mathrm{det} ( s \mathbf{1} - \mathbf{A} \mathbf{B} )
= \mathrm{det} ( s \mathbf{1} - \mathbf{B} \mathbf{A} )$ for all
scalars $s$ and all $n \times n$ matrices $\mathbf{A}$ and
$\mathbf{B}$. In the second step we have used that
\begin{equation}\label{eq:invA}
\begin{pmatrix}
\, a \, \mathbf{1} \, & \, b \, \mathbf{1} \,
\\
\, c \, \mathbf{1} \, & \, a \, \mathbf{1} \,  
\end{pmatrix}
\:
\begin{pmatrix}
\mathbf{0} & \; - \, p_{\rho} \mathbf{A} ^{\rho} \;
\\
\: p_{\rho} \mathbf{A} ^{\rho} \: & \mathbf{0}
\end{pmatrix}
\:
\begin{pmatrix}
\, a \, \mathbf{1} \, & \, c \, \mathbf{1} \,
\\
\, b \, \mathbf{1} \, & \, a \, \mathbf{1} \,  
\end{pmatrix}
\; = \;
\begin{pmatrix}
\mathbf{0} & \; - \, p_{\rho} \mathbf{A} ^{\rho} \;
\\
\: p_{\rho} \mathbf{A} ^{\rho} \: & \mathbf{0}
\end{pmatrix}
\end{equation}
if $ad-bc =1$, as can be quickly verified by multiplying out the 
the left-hand side.

As $SL(2 , \mathbb{R})$ is 3-dimensional, the orbits of the group
action must be of dimension $\le 3$. To calculate the dimension of
the orbits we have to differentiate the group action. A quick 
calculation shows that the tangent space to the orbit through
the matrix $\mathbf{M}$ with $3 \times 3$ blocks according to
(\ref{eq:defM}) is spanned by the three matrices
\begin{equation}\label{eq:Torbit}
\mathbf{E} _1 \, = \, 
\begin{pmatrix}
\; \boldsymbol{\kappa} \; & \mathbf{0}
\\
\mathbf{0} & \, \boldsymbol{\nu} \,
\end{pmatrix}
\; , \qquad
\mathbf{E} _2 \, = \, 
\begin{pmatrix}
\, \boldsymbol{\gamma} +  \boldsymbol{\chi} \, & \: \boldsymbol{\nu} \;
\\
\boldsymbol{\nu} & \,  \mathbf{0} \,
\end{pmatrix}
\; , \qquad
\mathbf{E} _1 \, = \, 
\begin{pmatrix}
\, \mathbf{0} \, & \boldsymbol{\kappa}
\\
\: \boldsymbol{\kappa} \: & \, \boldsymbol{\gamma} +  \boldsymbol{\chi}
\end{pmatrix}
\; .
\end{equation}
$\mathbf{E} _1 , \mathbf{E} _2$ and $\mathbf{E} _3$ are linearly independent unless 
one of the three matrices $\boldsymbol{\kappa}, \boldsymbol{\nu}$ and 
$\boldsymbol{\gamma} +  \boldsymbol{\chi}$ is zero and the other two are
linearly dependent. This demonstrates that the group action foliates
a dense and open subset of the set of all real $6 \times 6$ matrices
into three-dimensional orbits. 

The $SL(2 , \mathbb{R} )$ transformations on constitutive matrices contain two 
interesting special examples. The first is the one-parameter family of transformations
with $a = b = 0$ and $b = c^{-1}$ which corresponds to the so-called \emph{reciprocity 
transformations}. By definition, a reciprocity transformation is a transformation 
$F_{ab} \mapsto - \frac{1}{\zeta} H_{ab}$, $H_{ab} \mapsto
\zeta F_{ab}$ of field strength and excitation, where $\zeta$ 
is a nowhere vanishing pseudoscalar field. A reciprocity 
transformation changes 
\begin{equation}\label{eq:recfields}
  \vec{E} \mapsto \frac{1}{\zeta} \vec{\mathcal{H}} \: , \qquad
  \vec{B} \mapsto -\frac{1}{\zeta} \vec{D} \: , \qquad
  \vec{\mathcal{H}} \mapsto - \zeta \vec{E} \: , \qquad
  \vec{D} \mapsto \zeta \vec{B} \: ,
\end{equation}
and, thus, the constitutive matrix according to 
\begin{equation}\label{eq:reccon}
\begin{pmatrix}
\: \boldsymbol{\kappa} \: & \: \boldsymbol{\chi} \:
\\
\: \boldsymbol{\gamma} \: & \: \boldsymbol{\nu} \:
\end{pmatrix}
\: \longmapsto \:
\begin{pmatrix}
\: \frac{1}{\zeta ^2} \boldsymbol{\nu} \: & \: - \boldsymbol{\gamma} \:
\\
\: - \boldsymbol{\chi}  \: & 
\: \zeta ^2 \boldsymbol{\kappa} \:
\end{pmatrix}
\: .
\end{equation}
This is precisely the transformation produced by the $SL(2 , \mathbb{R})$
element with $a = b = 0$ and $b = c^{-1} = \zeta$. 
So our result contains as a special case the fact that the characteristic equation is invariant
under reciprocity transformations (cf. Hehl and Obukhov \cite{HehlObukhov2003}, pp 273). 
Note, however, that in the case of a nonlinear constitutive law a reciprocity transformation 
changes the argument of the constitutive matrix.

The second interesting special case is the one-parameter family of transformations
with $a=d=1$ and $c=0$. This corresponds to adding an \emph{axion} field, i.e.,
to a transformation $F_{ab} \mapsto F_{ab}$, $H_{ab} \mapsto H_{ab} + \phi F_{ab}$
with a pseudoscalar field $\phi$. 
%(The name ``axion'' for such a pseudoscalar
%field was introduced by Wilczek \cite{Wilczek1978}.) 
This transformation changes
\begin{equation}\label{eq:addaxion}
\vec{E} \mapsto \vec{E} \, , \qquad
\vec{B} \mapsto \vec{B} \, , \qquad
\vec{D} \mapsto \vec{D} + \phi \vec{B}  \, , \qquad
\vec{\mathcal{H}} \mapsto \vec{\mathcal{H}} - \phi \vec{E}  \, ,
\end{equation}
and, thus, the constitutive matrix according to 
\begin{equation}\label{eq:axionM}
\begin{pmatrix}
\: \boldsymbol{\kappa} \: & \: \boldsymbol{\chi} \:
\\
\: \boldsymbol{\gamma} \: & \: \boldsymbol{\nu} \:
\end{pmatrix}
\: \longmapsto \:
\begin{pmatrix}
\: \boldsymbol{\kappa} \: & \: \boldsymbol{\chi} + \phi \, \boldsymbol{\kappa} \:
\\
\: \boldsymbol{\gamma} + \phi \, \boldsymbol{\kappa}  \: & 
\: \boldsymbol{\nu} + \phi ( \boldsymbol{\gamma} + \boldsymbol{\chi}) + \phi ^2 \boldsymbol{\kappa} \:
\end{pmatrix}
\: .
\end{equation}
This is precisely the transformation produced by the $SL(2, \mathbb{R})$ 
element with $a=d=1$, $c=0$ and $b=\phi$. We have thus reproduced the 
known fact (see Hehl and Obukhov \cite{HehlObukhov2003}, p. 265)
that adding an axion field does not affect the characteristic equation.

%................................................................................ 
\section{Reduction to 3 dimensions}\label{sec:reduction}
\noindent
On the left-hand side of the characteristic equation (\ref{eq:characteristic})
we have the determinant of a $6 \times 6$ matrix. If the impermittivity matrix 
$\boldsymbol{\kappa}$  or the impermeability matrix $\boldsymbol{\nu}$ is 
invertible, this may be reduced to the determinant of a $3 \times 3$ matrix.
We give the derivation for the case that $\boldsymbol{\kappa}$ is invertible. 
We may use the decomposition (\ref{eq:transM}) of the constitutive
matrix. If we feed this into the characteristic equation (\ref{eq:characteristic}),
and apply Sylvester's formula that $\mathrm{det} ( s \mathbf{1} - \mathbf{A}  
\mathbf{B} ) \, = \,  \mathrm{det} ( s \mathbf{1} - \mathbf{B}  \mathbf{A} )$
for all scalars $s$ and all $n \times n$ matrices $\mathbf{A}$ and $\mathbf{B}$,
we find
\begin{gather}\label{eq:red}
  0 = \mathrm{det} 
  \left( \: p_0 \; \mathbf{1} \: - \:
  \begin{pmatrix} 
    \: \mathbf{1} \: & \: \boldsymbol{\kappa} ^{-1} \boldsymbol{\chi} \:
    \\[0.1cm]
    \mathbf{0} & \mathbf{1}
  \end{pmatrix}
  \begin{pmatrix}
    \mathbf{0}   & \: - p_{\rho} \mathbf{A} ^{\rho} \: 
    \\[0.2cm]
    \: p_{\rho} \mathbf{A} ^{\rho} \: & \mathbf{0}
  \end{pmatrix} \: 
  \begin{pmatrix} 
    \mathbf{1} & \: \mathbf{0} \:
    \\[0.1cm]
    \: \boldsymbol{\gamma} \boldsymbol{\kappa} ^{-1} \:
    & \mathbf{1}
  \end{pmatrix}
  \begin{pmatrix} 
    \: \boldsymbol{\kappa} \: 
    &  \mathbf{0} 
    \\[0.1cm]
    \mathbf{0} & \: \boldsymbol{\nu} - 
    \boldsymbol{\gamma} \boldsymbol{\kappa} ^{-1} \boldsymbol{\chi} \: 
  \end{pmatrix}
  \:   \right)
\\[0.4cm]
\qquad = \:
\mathrm{det} 
  \begin{pmatrix} 
  \quad   p_0 \mathbf{1} - p_{\rho} ( \boldsymbol{\kappa} ^{-1} 
   \boldsymbol{\chi} \mathbf{A} ^{\rho} - \mathbf{A} ^{\rho} \boldsymbol{\gamma}
   \boldsymbol{\kappa} ^{-1} ) \boldsymbol{\kappa} \quad 
   &  
   \quad p_{\sigma} \mathbf{A} ^{\sigma} ( \boldsymbol{\nu} - \boldsymbol{\gamma}
   \boldsymbol{\kappa} ^{-1} \boldsymbol{\chi} ) \quad 
   \\[0.1cm]
   - p_{\rho} \mathbf{A} ^{\rho} \boldsymbol{\kappa}
   & 
   p_0 \mathbf{1}
  \end{pmatrix}
  \: .
\end{gather}
We now use the well-known rule (see e.g. \cite{Sylvester2000}) that for
any $n \times n$ matrices $\mathbf{A}$, $\mathbf{B}$, $\mathbf{C}$, $\mathbf{D}$
\begin{equation}\label{eq:block}
  \mathrm{det}
  \begin{pmatrix}
    \mathbf{A} & \mathbf{B} \\ \mathbf{C} & \mathbf{D}
  \end{pmatrix}
  \, = \, \mathrm{det}
  \big( \mathbf{A}\mathbf{D}-\mathbf{B}\mathbf{C} \big)
  \qquad \text{if} \qquad
   \mathbf{C} \, \mathbf{D} \, = \,  \mathbf{D} \, \mathbf{C} \; .
  \end{equation}
This puts the characteristic equation into the form
\begin{equation}\label{eq:char2}
  0 = \mathrm{det}
  \big( \, p_0 ^2 \, \boldsymbol{\kappa}^{-1} + 
  p_0p_{\rho} ( \mathbf{A}^{\rho} \boldsymbol{\gamma} \boldsymbol{\kappa} ^{-1}
  - \boldsymbol{\kappa}^{-1} \boldsymbol{\chi} \mathbf{A}^{\rho} ) +
  p_{\rho} p_{\sigma} \mathbf{A} ^{\rho} 
  ( \boldsymbol{\nu} - \boldsymbol{\gamma} \boldsymbol{\kappa}^{-1} \boldsymbol{\chi} )
  \mathbf{A}^{\sigma} \, \big) \: .
\end{equation}
If $\boldsymbol{\nu}$ is invertible, an analogous calculation results in
\begin{equation}\label{eq:char1}
  0 = \mathrm{det}
  \big( \, p_0 ^2 \, \boldsymbol{\nu}^{-1} - 
  p_0 p_{\rho} ( \mathbf{A}^{\rho} \boldsymbol{\chi} \boldsymbol{\nu} ^{-1}
  - \boldsymbol{\nu}^{-1} \boldsymbol{\gamma} \mathbf{A}^{\rho} ) +
  p_{\rho} p_{\sigma} \mathbf{A} ^{\rho} 
  ( \boldsymbol{\kappa} - \boldsymbol{\chi} \boldsymbol{\nu}^{-1} \boldsymbol{\gamma} )
  \mathbf{A}^{\sigma} \, \big) \: .
\end{equation}
Note that, by a reciprocity transformation (\ref{eq:reccon}), equation (\ref{eq:char2}) 
transforms into (\ref{eq:char1}) and vice versa. Thus, if both 
$\boldsymbol{\nu}$ and $\boldsymbol{\kappa}$ are invertible, (\ref{eq:char2}) and 
(\ref{eq:char1}) are indeed equivalent forms of the characteristic equation. 

If the constitutive matrix (\ref{eq:defM}) is invertible, (\ref{eq:char2}) and 
(\ref{eq:char1}) are equivalent to the form derived by Graglia, 
Uslenghi and Zich \cite{GragliaUslenghiZich1991}, eq. (7) and (8),
apart from the fact that they considered only linear constitutive
laws. If the cross-terms 
$\boldsymbol{\chi}$ and $\boldsymbol{\gamma}$ vanish, they reduce to the form of 
Damaskos, Maffett and Uslenghi \cite{DamaskosMaffettUslenghi1982}, eq. (7). An alternative version of the characteristic equation,
for linear constitutive laws without cross-terms, was derived and
discussed by Itin \cite{Itin2010}.

%................................................................................ 
\section{Invariance under time and space inversion}\label{sec:time}
\noindent
In general, the characteristic equation is not invariant under time inversion
$(p_0, p_1, p_2, p_3) \mapsto (-p_0, p_1, p_2, p_3)$. Similarly, it is not
invariant under space inversion $(p_0, p_1, p_2, p_3) \mapsto (p_0, -p_1, -p_2, -p_3)$.
However, owing to the homogeneity of the characteristic polynomial, it is invariant under
combined time and space inversion, $(p_0, p_1, p_2, p_3) \mapsto (-p_0, -p_1, -p_2, -p_3)$.
This implies that the characteristic equation is invariant under time inversion if
and only if it is invariant under space inversion. It is easy to see from 
(\ref{eq:characteristic}), and even more obvious from (\ref{eq:char2}) or 
(\ref{eq:char1}), that a sufficient condition for invariance under time inversion
is that the magneto-electric cross-terms $\boldsymbol{\chi}$ and $\boldsymbol{\gamma}$ 
vanish. However, this is not necessary. 

Clearly, the characteristic equation is invariant under time inversion if and
only if its roots coincide pairwise up to sign, $\omega _4 = - \omega _1$
and $\omega _3 = -\omega _2$. From (\ref{eq:4omega}) we read that this is 
true if and only if $L^{\rho} = 0$ and $L^{\rho \sigma \tau}=0$. In this special 
case (\ref{eq:4omega}) reduces to two second order equations for $\omega _1 ^2$ 
and $\omega _2 ^2$ which can be solved easily. The characteristic equation reads 
\begin{equation}\label{eq:chartsym}
  0 = p_0^2 \, \big( p_0^2 - \omega _1 (\vec{p})^2 \big) \, 
  \big( p_0^2 - \omega _2 (\vec{p})^2 \big)
\end{equation}
with
\begin{equation}\label{eq:rootstsym}
  \omega _{1/2} (\vec{p})^2 = \frac{1}{4} p_{\rho} p_{\sigma} L^{\rho \sigma} \pm
  \sqrt{ \frac{1}{4} p_{\rho} p_{\sigma} p_{\tau} p_{\lambda} \big(
  L^{\rho \sigma \tau \lambda} - \frac{1}{4} L^{\rho \sigma} L^{\tau \lambda} \big) }
\end{equation}
Thus, the necessary and sufficient condition for hyperbolicity in the time-symmetric
case is that the right-hand side of (\ref{eq:rootstsym}) is real and non-negative,
i.e. $p_{\rho} p_{\sigma} L^{\rho \sigma} \, \ge \, 0 $ and
\begin{equation}\label{eq:hyptsym}
 \frac{1}{2}  \big( p_{\rho} p_{\sigma} L^{\rho \sigma} \big) ^2 
 \: \ge \:
 p_{\rho} p_{\sigma} p_{\tau} p_{\lambda} 
 L^{\rho \sigma \tau \lambda} 
 \: \ge \:  
 \frac{1}{4} \big( p_{\rho} p_{\sigma} L^{\rho \sigma} \big) ^2 
\end{equation}
for all $\vec{p}$ in $\mathbb{R}^3$. If, in addition, we want to prohibit 
zero-frequency modes, we have to strengthen the condition $p_{\rho} p_{\sigma} 
L^{\rho \sigma} \, \ge \, 0 $ to 
\begin{equation}\label{eq:nozf}
p_{\rho} p_{\sigma} 
L^{\rho \sigma} \, > \, 0 \qquad \text{for} \: \text{all} \quad \vec{p} \neq \vec{0} \;. 
\end{equation}
It would be desirable to rewrite (\ref{eq:hyptsym}) and (\ref{eq:nozf}) as
conditions on the constitutive matrix (\ref{eq:defM}). However, it is hard to 
see how this can be done in a practicable way.

Equation (\ref{eq:chartsym}), with (\ref{eq:rootstsym}) and (\ref{eq:nozf}), 
is the general form of the characteristic equation for the case that we have 
two real double-cones which are mirror-symmetric with respect to time inversion
and that zero-frequency modes are prohibited. 
Each of the two double-cones is the null cone of a \emph{Finsler metric}
\begin{equation}\label{eq:finsler}
  g_A ^{ab} \; = \; \frac{1}{2}
  \frac{\partial ^2 \big( 
  - p_0^2 + \omega _A (\vec{p})^2 \big)}{\partial p_a \partial p_b} \: , 
  \qquad A=1,2 \; . 
%  g_A ^{ab}(p_0, p_1, p_2, p_3) \, p_a p_b \, = \, 
%  p_0^2 - \omega _A (\vec{p})^2 \: , \qquad A=1,2 \, . 
\end{equation}
More explicitly, the time-time, time-space and space-space components of
the two Finsler metrics $g_1$ and $g_2$ read
\begin{equation}\label{eq:finsler2}
  g_{1/2}^{00} \, = \, - \, 1 \; , \qquad
  g_{1/2}^{0 \mu} \, = \, 0 \; , \qquad
  g_{1/2}^{\rho \sigma} \, = \, \frac{1}{4} L^{\rho \sigma} \, \pm \,
  \frac{\sqrt{ \frac{1}{4} p_{\rho} p_{\sigma} p_{\tau} p_{\lambda} \big(
  L^{\rho \sigma \tau \lambda} - \frac{1}{4} L^{\rho \sigma} L^{\tau \lambda} \big) }}{
  L^{\mu \nu} p_{\mu} p_{\nu}} \; L^{\rho \sigma} 
%  \; = \;
%  \frac{\omega _{1/2} (\vec{p})^2}{L^{\mu \nu} p_{\mu} p _{\nu}} \, L^{\rho \sigma} 
\; .
\end{equation}
The conditions (\ref{eq:hyptsym}) and (\ref{eq:nozf}) guarantee that each of
these two Finsler metrics has Lorentzian signature at all $\vec{p} \neq \vec{0}$.
For such Finsler metrics a Fermat principle was proven in \cite{Perlick2006}.
Note, however, that in this article the Finsler light cones where assumed to
be smooth everywhere (except, of course, at the vertex). This is not the case
with the metrics (\ref{eq:finsler2}). At points where the square-root in 
(\ref{eq:finsler2}) has an isolated zero the two light cones form conical
singularities. The resulting phenomenon of ``conical refraction'' was
already mentioned in Section \ref{sec:cone}.

%................................................................................ 
\section{Condition of non-birefringence}\label{sec:non-birefringence}
\noindent
If there are two different real roots $\omega _1 (\vec{p} ) \ge 0$ and 
$\omega _2 (\vec{p} ) \ge 0$, we have birefringence in the forward direction; similarly, 
if there are two different real roots $\omega _3 (\vec{p} ) \le 0$ and $\omega _4 (\vec{p} ) 
\le 0$, we have birefringence in the past direction. In this section  we want
to investigate the condition for non-birefringence. To that end we consider
the case that the four roots $\omega _1$, $\omega _2$, $\omega _3$, and $\omega _4$
pairwise coincide, $\omega _1 (\vec{p}) = \omega _2 (\vec{p} )$ and 
$\omega _3 (\vec{p}) = \omega _4 (\vec{p})$ for all $\vec{p} \in \mathbb{R} ^3$. 
For the time being we do not require that the roots are real. With the help
of our assumption that the roots pairwise coincide, it is easy to solve the
first two equations of (\ref{eq:4omega}),
\begin{equation}\label{eq:omega13}
  \omega _{1/3} (\vec{p}) \; = \;
  \frac{1}{4} \, p_{\rho} L^{\rho} \, \pm \,
  \sqrt{\, \frac{1}{4} \, p_{\rho} \, p_{\sigma} \, \big( L^{\rho \sigma} \, - \, 
  \frac{1}{4} L^{\rho} L^{\sigma} \big) \, } \: .
\end{equation}
By a generalised Galilean transformation we can always transform $L^{\rho}$ to zero, recall
(\ref{eq:4GalL}), so that (\ref{eq:omega13}) simplifies to 
\begin{equation}\label{eq:omega13red}
  \omega _{1/3} (\vec{p}) \; = \; \pm \,
  \sqrt{\, \frac{1}{4} \, p_{\rho} \, p_{\sigma} \,  L^{\rho \sigma} \, } \: .
\end{equation}
Note that time-symmetry is then automatically satisfied.
As a consequence, the reduced characteristic equation reads
\begin{equation}\label{eq:nobiref}
  (p_0-\omega _1 (\vec{p}) )^2 \, (p_0-\omega _3 (\vec{p}) )^2 
  \; = \;
  ( p_0 ^2 - \frac{1}{4}L^{\rho \sigma} p_{\rho} p_{\sigma} )^2 
  \; = \; 0 \; .  
\end{equation}
Thus, the characteristic variety is the null-cone of a quadratic form, i.e.,
our assumption that the roots pairwise coincide excludes proper Finsler
structures. The coefficients $L^{\rho \sigma}$ that determine the quadratic
form depend on $x$ and, in the case of a non-linear constitutive law, also 
on $\vec{D}(x)$ and $\vec{B}(x)$.  This result is true independent of whether 
or not we require hyperbolicity, i.e., independent of whether or not the 
roots (\ref{eq:omega13red}) are real.

By (\ref{eq:omega13red}), hyperbolicity is satisfied if and only $L^{\rho \sigma}$ is positive
semidefinite. This gives a Lorentzian or a degenerate quadratic form. The 
degenerate case is excluded if we require that $\omega _A 
(\vec{p}) \neq 0$ for $A=1,2,3,4$ and $\vec{p} \neq \vec{\, 0}$. Thus, the
condition of non-birefringence necessarily leads to a Lorentzian null cone 
if we require hyperbolicity and exclude zero-frequency modes. 

% In any case, the conditions $\omega _1 (\vec{p}) = \omega _4 [\vec{p}$ 
% and $\omega _2 (\vec{p}) = \omega _3 (\vec{p}$ prohibit the occurence of 
% birefringence. We have thus shown that In this case we know from Subsection \ref{sec:cone} that 
% each half of the null-cone of $g^{ab}$ must be the boundary of a convex set. 
% Thus, $g^{ab}$ must have Lorentzian signature, if we exclude the case that 
% $g^{ab}$ is degenerate which is necessary if we want to exclude additional 
% zero-frequency modes. We have thus demonstrated that the absence of birefringence
% requires a Lorentzian metric; proper Finsler structures occur only in the case
% of birefringence. 
These findings corroborate earlier results found by Hehl and L{\"a}mmerzahl 
\cite{HehlLaemmerzahl2004} and by Itin \cite{Itin2005} for
linear constitutive laws. They give a satisfactory answer to the question
of what kind of light cones are possible in the case of non-birefringence. 
However, it would also be desirable to have a condition on the constitutive 
matrix  that is necessary and sufficient for non-birefringence. Such a condition
is still to be found.

%................................................................................ 
\section{The symmetric hyperbolic case}\label{sec:symhyp}
\noindent
Recall that the evolution equations are symmetric hyperbolic if and 
only if there is a matrix $\mathbf{S}$ such that 
\begin{equation}\label{eq:symhyp}
  \mathbf{S}^{-1} p_{\rho} \mathbf{L}^{\rho} \mathbf{S} =
  \big( \mathbf{S}^{-1} p_{\rho} \mathbf{L}^{\rho} \mathbf{S} \big)^T
\end{equation}
for all $\vec{p}$ in $\mathbb{R}^3$, where $( \cdot )^T$ denotes 
transposition. With $\mathbf{L}^{\rho}$ from (\ref{eq:Lrho}), (\ref{eq:symhyp})
takes the form
\begin{equation}\label{eq:Msym} 
  \begin{pmatrix}
    0 & - p_{\rho} \mathbf{A}^{\rho} \\
    p_{\rho} \mathbf{A}^{\rho}  & \mathbf{0}\\
  \end{pmatrix}
  \; \mathbf{M} \; \mathbf{S} \; \mathbf{S}^T 
  \; = \; 
   \mathbf{S} \; \mathbf{S}^T \; \mathbf{M}^T \;  
  \begin{pmatrix}
    0 & - p_{\rho} \mathbf{A}^{\rho} \\
    p_{\rho} \mathbf{A}^{\rho}  & \mathbf{0}\\
  \end{pmatrix}
  \; .
\end{equation}  
We want to give a characterization of the symmetric hyperbolic case in
terms of the constitutive matrix $\mathbf{M}$. To that end we use the
following result which is based on a simple Schur lemma type argument.
\begin{lem}\label{lem:Schur}
  If $\mathbf{U}$ and $\mathbf{V}$ are real $3 \times 3$ matrices such 
  that
\begin{equation}\label{eq:Schur}
  p_{\rho} \mathbf{A}^{\rho} \mathbf{U}=   
  \mathbf{V} p_{\rho} \mathbf{A}^{\rho} 
\end{equation}
for all $\vec{p} \in \mathbb{R}^3$, then $\mathbf{U}=\mathbf{V}= 
c \mathbf{1}$ with some $c \in \mathbb{R}$.
\end{lem}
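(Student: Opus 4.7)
My plan is to exploit the one key structural fact about the family $\{p_\rho \mathbf{A}^\rho\}$: each such matrix is antisymmetric, nonzero whenever $\vec{p}\neq\vec{0}$, and annihilates precisely the line through $\vec{p}$. Indeed, from the definitions in (\ref{eq:Arho}) one checks directly that $p_\rho\mathbf{A}^\rho \vec{v} = \vec{v}\times\vec{p}$ (or, up to a sign, whichever convention), so its kernel is exactly $\mathbb{R}\vec{p}$.

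Given this, the strategy is the following. First, evaluate the hypothesis (\ref{eq:Schur}) on the vector $\vec{p}$ itself: the right-hand side becomes $\mathbf{V}\,p_\rho\mathbf{A}^\rho \vec{p} = \vec{0}$, hence $p_\rho\mathbf{A}^\rho(\mathbf{U}\vec{p}) = \vec{0}$, and therefore $\mathbf{U}\vec{p}\in\ker(p_\rho\mathbf{A}^\rho) = \mathbb{R}\vec{p}$. Since $\vec{p}$ was arbitrary in $\mathbb{R}^3\setminus\{\vec{0}\}$, every nonzero vector is an eigenvector of $\mathbf{U}$; a standard argument (applied to $\vec{p}$, $\vec{q}$ linearly independent, and to their sum) then forces all eigenvalues to coincide, so $\mathbf{U} = c\,\mathbf{1}$ for some $c\in\mathbb{R}$.

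Next, to pin down $\mathbf{V}$, I would take the transpose of (\ref{eq:Schur}), using $(p_\rho\mathbf{A}^\rho)^T = -p_\rho\mathbf{A}^\rho$, which rearranges to $p_\rho\mathbf{A}^\rho\,\mathbf{V}^T = \mathbf{U}^T p_\rho\mathbf{A}^\rho$. This has the same shape as the original hypothesis with $\mathbf{V}^T$ in place of $\mathbf{U}$, so the previous argument gives $\mathbf{V}^T = c'\,\mathbf{1}$, whence $\mathbf{V} = c'\,\mathbf{1}$.

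Finally, substituting $\mathbf{U} = c\mathbf{1}$ and $\mathbf{V} = c'\mathbf{1}$ back into (\ref{eq:Schur}) yields $(c-c')\,p_\rho\mathbf{A}^\rho = 0$ for all $\vec{p}$, and since $p_\rho\mathbf{A}^\rho\neq\mathbf{0}$ for any $\vec{p}\neq\vec{0}$, we conclude $c = c'$. There is no real obstacle here; the only mildly subtle point is the elementary linear-algebra fact that a $3\times 3$ matrix for which every nonzero vector is an eigenvector must be a scalar multiple of the identity, and I would justify it in one line by choosing two linearly independent vectors together with their sum.
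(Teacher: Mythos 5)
Your proof is correct and follows essentially the same route as the paper: both hinge on the fact that $\ker\bigl(p_{\rho}\mathbf{A}^{\rho}\bigr)=\mathbb{R}\vec{p}$ for $\vec{p}\neq\vec{0}$, so evaluating (\ref{eq:Schur}) on $\vec{p}$ shows that $\mathbf{U}$ maps every line into itself and is therefore $c\,\mathbf{1}$. The only (minor) divergence is in pinning down $\mathbf{V}$: the paper substitutes $\mathbf{U}=c\,\mathbf{1}$ into (\ref{eq:Schur}) and uses that the images of $p_{\rho}\mathbf{A}^{\rho}$, as $\vec{p}$ ranges over $\mathbb{R}^3$, sweep out all of $\mathbb{R}^3$, so $(\mathbf{V}-c\,\mathbf{1})\,p_{\rho}\mathbf{A}^{\rho}=\mathbf{0}$ forces $\mathbf{V}=c\,\mathbf{1}$ directly, whereas you transpose the hypothesis and use antisymmetry of $p_{\rho}\mathbf{A}^{\rho}$ to reduce the $\mathbf{V}$ case to the $\mathbf{U}$ case and then match the two scalars --- a slightly longer but equally valid variant of the same argument.
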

\begin{proof}
  For any $\vec{p} \neq \vec{\, 0}$ in $\mathbb{R}^3$, the matrix $p_{\rho}
  \mathbf{A}^{\rho}$ has a one-dimensional kernel spanned by $\vec{p}$.
  Thus, by applying (\ref{eq:Schur}) to multiples of $\vec{p}$ we see
  that $\mathbf{U}$ maps every one-dimensional subspace into itself, hence
  $\mathbf{U} = c\mathbf{1}$. Then (\ref{eq:Schur}) takes the form 
  $( \mathbf{V} - c \mathbf{1}) p_{\rho} \mathbf{A}^{\rho} = \mathbf{0}$.
  If $\vec{p}$ runs over $\mathbb{R}^3$, the image of $p_{\rho} \mathbf{A}^{\rho}$
  runs over $\mathbb{R}^3$. Thus, our last equation requires $\mathbf{V} = 
  c \mathbf{1}$    
\end{proof}
With the help of this lemma, we prove the following proposition.
\begin{proposition}\label{prop:posdef}
The evolution equations are symmetric hyperbolic if and only if the
constitutive matrix $\mathbf{M}$ is (i) the zero matrix, (ii) positive
definite, or (iii) negative definite.
\end{proposition}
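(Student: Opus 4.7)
} The plan is to translate symmetric hyperbolicity into a symmetry condition on a single matrix $\mathbf{R}=\mathbf{M}\mathbf{P}$, where $\mathbf{P}=\mathbf{S}\mathbf{S}^{T}$ is an arbitrary real symmetric positive-definite matrix, and then to pin down $\mathbf{R}$ by repeated application of Lemma~\ref{lem:Schur}. First, rewrite (\ref{eq:Msym}) by multiplying from the right by $(\mathbf{S}^{T})^{-1}$ and using $(p_{\rho}\mathbf{A}^{\rho})^{T}=-p_{\rho}\mathbf{A}^{\rho}$: the condition becomes that for all $\vec{p}\in\mathbb{R}^{3}$, the $6\times 6$ matrix
\begin{equation*}
  p_{\rho}\mathbf{J}^{\rho}\,\mathbf{M}\,\mathbf{P},
  \qquad
  \mathbf{J}^{\rho}:=
  \begin{pmatrix}\mathbf{0} & -\mathbf{A}^{\rho}\\ \mathbf{A}^{\rho} & \mathbf{0}\end{pmatrix},
\end{equation*}
is symmetric, where $\mathbf{P}$ ranges over real symmetric positive-definite matrices. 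Note that $\mathbf{J}^{\rho}$ itself is symmetric because each $\mathbf{A}^{\rho}$ is antisymmetric.

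Next I would write $\mathbf{R}=\mathbf{M}\mathbf{P}$ in $3\times 3$ block form $\mathbf{R}=\bigl(\begin{smallmatrix}\mathbf{R}_{11}&\mathbf{R}_{12}\\ \mathbf{R}_{21}&\mathbf{R}_{22}\end{smallmatrix}\bigr)$ and carry out the block multiplication. The symmetry of $p_{\rho}\mathbf{J}^{\rho}\mathbf{R}$ gives three independent equations, one for each pair of blocks: $p_{\rho}\mathbf{A}^{\rho}\mathbf{R}_{11}=\mathbf{R}_{22}^{T}\,p_{\rho}\mathbf{A}^{\rho}$, $p_{\rho}\mathbf{A}^{\rho}\mathbf{R}_{21}=(-\mathbf{R}_{21}^{T})\,p_{\rho}\mathbf{A}^{\rho}$, and $p_{\rho}\mathbf{A}^{\rho}\mathbf{R}_{12}=(-\mathbf{R}_{12}^{T})\,p_{\rho}\mathbf{A}^{\rho}$, each required to hold for every $\vec{p}$. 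Applying Lemma~\ref{lem:Schur} to each equation, the first forces $\mathbf{R}_{11}=\mathbf{R}_{22}^{T}=c\mathbf{1}$ for some $c\in\mathbb{R}$; the second and third each force the corresponding block to equal a scalar multiple of $\mathbf{1}$ that also equals its negative transpose, hence $\mathbf{R}_{12}=\mathbf{R}_{21}=\mathbf{0}$. Consequently $\mathbf{M}\mathbf{P}=c\,\mathbf{1}_{6}$.

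Since $\mathbf{P}$ is invertible, this splits into two cases. If $c=0$, then $\mathbf{M}=\mathbf{0}$. If $c\neq 0$, then $\mathbf{M}=c\,\mathbf{P}^{-1}$; as $\mathbf{P}^{-1}$ is symmetric and positive definite, $\mathbf{M}$ is symmetric and is positive definite if $c>0$ and negative definite if $c<0$. This establishes the ``only if'' direction. For the converse, if $\mathbf{M}=\mathbf{0}$ any invertible $\mathbf{S}$ works trivially; if $\mathbf{M}$ is positive definite let $\mathbf{S}=\mathbf{M}^{-1/2}$ (the unique positive-definite square root of $\mathbf{M}^{-1}$), so that $\mathbf{M}\,\mathbf{S}\mathbf{S}^{T}=\mathbf{1}_{6}$ and $p_{\rho}\mathbf{J}^{\rho}\mathbf{M}\mathbf{S}\mathbf{S}^{T}=p_{\rho}\mathbf{J}^{\rho}$ is symmetric; if $\mathbf{M}$ is negative definite use $\mathbf{S}=(-\mathbf{M})^{-1/2}$ and the same check succeeds up to an overall sign.

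The only real obstacle is bookkeeping: one has to be careful with the antisymmetry of $p_{\rho}\mathbf{A}^{\rho}$ when moving transposes across the blocks, and one must verify that the Schur-type lemma applies in exactly the form needed for each of the three block equations. Once the three equations are correctly identified, Lemma~\ref{lem:Schur} does all of the work, and the necessary and sufficient condition $\mathbf{M}\mathbf{P}\propto\mathbf{1}$ reduces the classification of symmetric-hyperbolic constitutive matrices to a one-parameter decision on the sign of $c$.
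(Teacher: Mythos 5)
Your proposal is correct and takes essentially the same approach as the paper: both reduce symmetric hyperbolicity to the block condition (\ref{eq:Msym}), apply Lemma \ref{lem:Schur} blockwise to conclude $\mathbf{M}\,\mathbf{S}\mathbf{S}^{T}=c\,\mathbf{1}$, and then classify by the sign of $c$. Your only variation---using the symmetry of $p_{\rho}\mathbf{J}^{\rho}$ to absorb $\mathbf{S}\mathbf{S}^{T}\mathbf{M}^{T}$ as the transpose of $\mathbf{M}\mathbf{S}\mathbf{S}^{T}$ from the outset, so that three applications of the lemma replace the paper's four plus a final transpose comparison, together with spelling out the converse via $\mathbf{S}=\mathbf{M}^{-1/2}$---is cosmetic.
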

\begin{proof}
By applying Lemma \ref{lem:Schur} to each of the four $3 \times 3$ blocks
of (\ref{eq:Msym}) we find 
\begin{equation}\label{eq:symblock1}
  \mathbf{M} \mathbf{S} \mathbf{S}^T =
  \begin{pmatrix}
     b \mathbf{1} & d \mathbf{1} \\
     a \mathbf{1} & c \mathbf{1} 
  \end{pmatrix}
  \; ,
\end{equation}
\begin{equation}\label{eq:symblock2}
  \mathbf{S} \mathbf{S}^T \mathbf{M}^T =
  \begin{pmatrix}
     c \mathbf{1} & -a \mathbf{1} \\
     -d \mathbf{1} & b \mathbf{1} 
  \end{pmatrix}
  \; ,
\end{equation}
with real numbers $a,b,c,d$. As the left-hand side of (\ref{eq:symblock1})
is the transpose of the left-hand side of (\ref{eq:symblock2}), this can
be true only if $a=d=0$ and $b=c$, i.e. the necessary and sufficient condition
for symmetric hyperbolicity is that there exists an invertible matrix $\mathbf{S}$
such that $\mathbf{M} = c (\mathbf{S} \mathbf{S}^T)^{-1}$. This is true (i) with
$c=0$ if and only if $\mathbf{M}=\mathbf{0}$, (ii) with $c > 0$ if and only if
$\mathbf{M}$ is positive definite, and (iii) with $c<0$ if and only if $\mathbf{M}$
is negative definite.
\end{proof}
Clearly, the case $\mathbf{M}= \mathbf{0}$ yields the characteristic equation $p_0^6 = 0$
and is physically uninteresting. Thus, $\mathbf{M}$ must be positive or negative
definite to give symmetric hyperbolic evolution equations. In the case of a linear 
constitutive law this condition is equivalent to the assumption that the 
\emph{energy density} $w = \frac{1}{2} ( E_{\rho}D^{\rho} + {\mathcal{H}}_{\rho} B^{\rho} )$ 
is positive or negative definite. In particular, the positive (or negative) 
definiteness of $\mathbf{M}$ requires that $\boldsymbol{\kappa}$ and 
$\boldsymbol{\nu}$ are positive (or negative) definite, i.e., that we have  
positive (or negative) definite permittivity $\boldsymbol{\varepsilon} = 
\boldsymbol{\kappa}^{-1}$ and permeability $\boldsymbol{\mu} = \boldsymbol{\nu}^{-1}$.   

Proposition \ref{prop:posdef} generalises a result that was derived in 
\cite{Perlick2000b}, Section 2.1. There only linear constitutive laws were 
considered and a Lorentzian metric was presupposed. It was shown that, if 
magneto-electric cross-terms are absent and permeability and permittivity 
are positive definite, the evolution equations are symmetric hyperbolic. 
The above result shows that the definiteness condition is not 
only sufficient but also necessary and that this result (i) carries over 
to non-linear local constitutive laws, (ii) can be formulated without 
reference to a background metric, and (iii) remains true if magneto-electric
cross-terms are allowed.  

%................................................................................ 
\section{Examples}\label{sec:ex}
\subsection{Biisotropic media}\label{sec:biiso}
\noindent
A medium is called \emph{biisotropic} (at a point $x$) if there is coordinate system
such that each $3 \times 3$ block of the constitutive matrix is a scalar multiple
of the unit matrix (at $x$), i.e.  
\begin{equation}\label{eq:biiso}
\mathbf{M} \, = \, 
\begin{pmatrix}
\boldsymbol{\kappa} & \boldsymbol{\chi}
\\
\boldsymbol{\gamma} & \boldsymbol{\nu}
\end{pmatrix}
\, = \, 
\begin{pmatrix}
\kappa \mathbf{1} & \chi \mathbf{1}
\\
\gamma \mathbf{1} & \nu \mathbf{1} 
\end{pmatrix}
\end{equation}
with scalars $\kappa, \nu, \chi, \gamma$. If this is true with $\chi = \gamma = 0$,
the medium is called \emph{isotropic}. We want to find a neccessary and
sufficient condition for a biisotropic medium to yield hyperbolic evolution 
equations.

With (\ref{eq:biiso}), the characteristic equation becomes
\begin{equation}\label{eq:charbiiso}
\mathrm{det} \big( \,
p_0^2 \, \mathbf{1} \, - \, p_0 \, p_{\rho} \, \mathbf{A} ^{\rho} (\chi - \gamma ) 
\, + \, 
p_{\rho} \, p_{\sigma} \, \mathbf{A} ^{\rho} \mathbf{A} ^{\sigma} (\kappa \nu - \chi \gamma )
\, \big) \, = \, 0
\end{equation}
as can be read from (\ref{eq:char2}) or (\ref{eq:char1}). The determinant can 
easily be calculated, resulting in
\begin{equation}\label{eq:charbiiso2}
p_0^2 \, \Big( \, \big( \, p_0^2 \, - \, | \, \vec{p} \, | ^2 \, 
(\kappa \nu - \chi \gamma ) \, \big)^2
\, + \, p_0^2 \, | \, \vec{p} \, | ^2 \, (\chi - \gamma )^2 \, \Big) 
\, = \, 0 \; .
\end{equation}
Hyperbolicity requires that all roots of this equation are real, which is obviously
the case if and only if $\chi = \gamma$ and $\kappa \nu - \chi \gamma \ge 0 \,$.
In the last inequality we replace the $\ge$ sign by a $>$ sign, as the $=$ sign 
only gives zero-frequency modes. With this unphysical zero-frequency case omitted,
we can thus say that a biisotropic medium yields hyperbolic evolution equations
if and only if
\begin{equation}\label{eq:biisocon}
\chi = \gamma \quad \text{and} \quad \mathrm{det} ( \mathbf{M} )  > 0 \; .
\end{equation}
In this case the reduced characteristic equation has two roots of multiplicity 2,
\begin{equation}\label{eq:biisoroots}
\omega _1 ( \vec{p} \, ) \, = \, \omega _2 ( \vec{p} \, ) \, = \, 
- \, \omega _3 ( \vec{p} \, ) \, = \, - \, \omega _4 ( \vec{p} \, ) \, = \, 
| \, \vec{p} \, | \, \sqrt{\mathrm{det} ( \mathbf{M} ) \,} \; .
\end{equation}
There is no birefringence, i.e., we have a unique future light cone and a unique
past light cone. This double-cone is the null cone of the Lorentzian metric
\begin{equation}\label{eq:biisolor}
g^{00} \, = \, - \, 1  \, \quad g^{0 \mu} \, = \, 0 \, \quad 
g^{\mu \nu} \, = \, \sqrt{\mathrm{det} ( \mathbf{M} )} \; \delta ^{\mu \nu} 
\end{equation}
and, of course, also of any metric that is conformal to this one.

Note that (\ref{eq:biisocon}) is equivalent to the requirement that the matrix
(\ref{eq:biiso}) is (positive or negative) definite. From Section \ref{sec:symhyp}
we know that then the evolution equations are symmetric hyperbolic. Hence, for 
a biisotropic medium the condition of hyperbolicity is equivalent to the 
condition of symmetric hyperbolicity. For an isotropic medium (\ref{eq:biisocon})
reduces to the condition that $\kappa = \varepsilon ^{-1}$
and $\nu = \mu ^{-1}$ must have the same sign.

%\subsection{Axion fields}\label{sec:axion}
%\noindent
%The simplest constitutive law one can imagine is the oen where the field strength
%and the excitation are proportional to each other, $F_{ab}(x) = \phi (x) H_{ab}(x)$
%with a pseudoscalar field $\phi$. This gives the relations
%\begin{equation}\label{eq:conaxion}
%\vec{D} \, = \, \phi \, \vec{B} 
%\qquad \text{and} \qquad 
%\vec{\mathcal{H}} \, = \, \phi \, \vec{E}
%\end{equation}
%in any coordinate system. This constitutive law was briefly discussed by 
%Schr{\"o}dinger \cite{Schroedinger1950} who discarded it as unphysical.
%Wilczek \cite{Wilczek1978} introduced the name ``axion'' for the pseudoscalar
%field $\phi$. Lindell and Sihvola \cite{LindellSihvola2005} called a medium
%with the constitutive law (\ref{eq:conaxion} a ``perfect electromagnetic 
%conductor''. 

%Clearly, (\ref{eq:conaxion}) cannot be solved for $\vec{E}$
%and $\vec{\mathcal{H}}$, i.e., this constitutive law is not regular in the
%sense of Definition \ref{def:regular}. Hence, it does not lead to a system
%of evolution equations for which the number of equations equals the number 
%of dynamical variables. 

%..........................................................................................
\subsection{Born-Infeld theory}\label{sec:borninfeld}
\noindent
Born-Infeld theory was introduced by Born and Infeld in 1934 \cite{BornInfeld1934}.
The motivation was to modify standard vacuum electrodynamics in such a way that the
field energy in a small ball around a point charge is finite. This was achieved by
assuming a non-linear vacuum constitutive law of the form $H_{ab} = \partial L / 
\partial F_{ab}$, where $L$ is the Born-Infeld Lagrangian $L = -\sqrt{b^4 + b^2 F_{ab}F^{ab}
- *F_{ab}F^{ab}}$. Here one assumes that, as in standard vacuum electrodynamics, a
spacetime metric of Lorentzian signature is given: the star is the Hodge operator defined 
by the spacetime metric and latin indices are raised and lowered with the spacetime metric.
$b$ is a constant of nature, called the ``absolute field'' by Born and Infeld.    

As we allowed for non-linear constitutive laws throughout, Born-Infeld
theory fits perfectly well into the general scheme considered in this
paper. In the following we will apply the results of the preceding sections
to Born-Infeld theory, thereby deriving the structure of the characteristic
variety (i.e., of the light cones) in the Born-Infeld theory and establishing
the result that the Born-Infeld theory admits a well-posed initial-value
problem. None of these results is new. (The light cones  of the Born-Infeld theory
were determined, e.g by Boillat \cite{Boillat1970}; a proof that the Born-Infeld
initial-value problem is well-posed can be found e.g. in Serre \cite{Serre2004}.) 
However, the derivations given here are quite different from the ones available
in the literature and illustrate the general results given above.

Our first goal is to demonstrate that, for initial values given on a 
hypersurface that is spacelike with respect to the spacetime metric,
the Born-Infeld initial-value problem is well-posed. To that end we
choose coordinates $(x^0,x^1,x^2,x^3)$ such that the chosen spacelike 
hypersurface is given by the equation $x^0 = \mathrm{constant}$. In addition,
we may assume that, at some particular point on the hypersurface, the coordinates 
are pseudo-orthonormal with respect to the  spacetime metric. This leaves he 
freedom of orthogonal transformations of the spatial coordinates on the tangent
space of the chosen point. Then, at the chosen point, the Lagrangian takes the form 
$L= -\sqrt{b^4 +b^2(B^{\mu}B_{\mu} - E^{\mu}E_{\mu}) - (E^{\mu}B_{\mu})^2}$. Here
and in the following, greek indices are raised and lowered with the Kronecker delta.
At the chosen point, the constitutive law reads
\begin{gather}
\label{eq:conBI1}
D^{\mu} = \frac{\partial L}{\partial E_{\mu}} =
\frac{b^2 E^{\mu} + B^{\nu}E_{\nu} B^{\mu}}{
\sqrt{b^4 +b^2(B^{\mu}B_{\mu} - E^{\mu}E_{\mu}) - (E^{\mu}B_{\mu})^2}} \: ,      
\\
\label{eq:conBI2}
\mathcal{H}{}^{\mu} = - \frac{\partial L}{\partial B_{\mu}} =
\frac{b^2 B^{\mu} - B^{\nu}E_{\nu} E^{\mu}}{
\sqrt{b^4 +b^2(B^{\mu}B_{\mu} - E^{\mu}E_{\mu}) - (E^{\mu}B_{\mu})^2}}     \: .
\end{gather}
Clearly, in the limit $b \to \infty$ the non-linear equations (\ref{eq:conBI1}) and 
(\ref{eq:conBI2}) tend to the linear standard vacuum constitutive law $D^{\mu} = E^{\mu}$ 
and $\mathcal{H}{}^{\mu} = E^{\mu}$. (When comparing our equations (\ref{eq:conBI1}) and 
(\ref{eq:conBI2}) with the corresponding equations on page 437 in the original Born-Infeld 
paper \cite{BornInfeld1934}, note that there is a sign error in the latter.) 

Equations (\ref{eq:conBI1}) and (\ref{eq:conBI2}) can be solved for $E^{\mu}$ and $\mathcal{H}{}^{\mu}$, which 
demostrates that our coordinates are admissible in the sense of Definition \ref{def:regular}.
The resulting equations, which are found after an elementary though rather tedious calculation,
read
\begin{gather}\label{eq:conBI3}
E^{\mu} = \frac{\partial W}{\partial D_{\mu}} =
\frac{1}{W} \big( (b^2 + B^{\nu}B_{\nu})D^{\mu} - B^{\nu}D_{\nu} B^{\mu}\big)
\\
\label{eq:conBI4}
\mathcal{H}{}^{\mu} = \frac{\partial W}{\partial B_{\mu}} =
\frac{1}{W} \big( (b^2 + D^{\nu}D_{\nu})B^{\mu} - B^{\nu}D_{\nu} D^{\mu} \big)
\end{gather}
 where
\begin{equation}\label{eq:defW}
W(\vec{D}, \vec{B}) = 
\sqrt{(b^2 +B^{\rho}B_{\rho})(b^2+D^{\sigma}D_{\sigma}) - (B^{\tau}D_{\tau})^2}
\end{equation}
is the Legendre transform of $L(\vec{E}, \vec{B})$ with respect to the pair of 
variables $\vec{E}, \vec{D}$, i.e., $W = E_{\mu}D^{\mu}-L$.  Now the $3 \times 3$
blocks of the constitutive matrix take the form
\begin{gather}
\nonumber
\kappa _{\alpha \beta} = \frac{\partial ^2 W}{\partial D^{\alpha} \partial D^{\beta}} 
\qquad \qquad \qquad \qquad
\\
\nonumber
= \frac{(b^2+B^{\sigma}B_{\sigma})}{W^3}  \Big(
W^2 \delta _{\alpha \beta} - (b^2+B^{\rho}B_{\rho}) D_{\alpha}D_{\beta}   
- (b^2+D^{\tau}D_{\tau}) B_{\alpha}B_{\beta} + B^{\lambda}D_{\lambda}
(B_{\alpha}D_{\beta}+ B_{\beta}D_{\alpha}) \Big) \; ,
\\
\label{eq:MBI}
\nu_{\alpha \beta} = \frac{\partial ^2 W}{\partial B^{\alpha} \partial B^{\beta}}
= \frac{(b^2+D^{\tau}D_{\tau})}{(b^2+B^{\rho}B_{\rho})}
\kappa_{\alpha \beta} \; , \qquad \qquad 
\\
\nonumber
\gamma _{\alpha \beta} = \chi _{\beta \alpha} =
\frac{\partial ^2 W}{\partial B^{\alpha} \partial D^{\beta}}
= \frac{B^{\tau}D_{\tau}}{(b^2+B^{\rho}B_{\rho})}
\kappa _{\alpha \beta} + \frac{B_{\alpha}D_{\beta}-B_{\beta}D_{\alpha}}{W} \; .
\qquad 
\end{gather}
$\boldsymbol{\kappa}$ is symmetric, so it has three real eigenvalues with 
orthogonal eigenvectors. As our coordinate system is fixed only up to 
orthogonal transformations of the spatial coordinates, we may choose the
coordinates such that $\boldsymbol{\kappa}$ is diagonal.
The eigenvalues $\kappa _1, \kappa _2$ and $\kappa _3$ of $\boldsymbol{\kappa}$ 
are

\begin{gather}
\nonumber
\kappa _{1/2} = 
\frac{b^2(b^2+B^{\sigma}B_{\sigma})}{W^3}
\Big( b^2 + \frac{D^{\mu}D_{\mu}+B^{\nu}B_{\nu}}{2} \pm
\sqrt{\frac{(B^{\rho}B_{\rho}-D^{\sigma}D_{\sigma})^2}{4}
+ (D^{\tau}B_{\tau})^2 \;} \: \Big) \; ,
\\
\label{eq:kappaev}
\kappa _3 = \frac{(b^2+B^{\sigma}B_{\sigma})}{W} \; .
\qquad \qquad
\end{gather}
As they are strictly positive, $\boldsymbol{\kappa}$ is positive 
definite for all $(\vec{D}, \vec{B})$ in $\mathbb{R} ^6$.

Now we make a coordinate transformation $(x^0,x^1,x^2,x^3) \mapsto 
(\tilde{x}{}^0, \tilde{x}{}^1, \tilde{x}{}^2, \tilde{x}{}^3 )$ 
that induces at the chosen point a generalised Galilean 
transformation (\ref{eq:Galileo1}) and (\ref{eq:Galileo2}) with
\begin{equation}\label{eq:BIgal}
\mathbf{a} = \mathbf{b}{}^{-1} = \mathrm{diag}( \sqrt{\kappa _1},
\sqrt{\kappa _2},\sqrt{\kappa _3}) \; , \quad
c = \mathrm{det}(\mathbf{a}) = \sqrt{\kappa _1 \kappa _2 \kappa _3} \; , \quad
v^{\sigma} = \frac{1}{W} \epsilon ^{\sigma \mu \nu}B_{\mu}D_{\nu} \; .
\end{equation}
By (\ref{eq:GalM}), this transforms (\ref{eq:MBI}) into
\begin{equation}\label{eq:MBI2}
\tilde{\kappa}{} _{\alpha \beta} = \delta _{\alpha \beta} \, , \quad
\tilde{\nu}{}_{\alpha \beta} = \frac{(b^2+D^{\tau}D_{\tau})}{(b^2+B^{\rho}B_{\rho})}
\delta _{\alpha \beta} \; , \quad 
\tilde{\gamma}{}_{\alpha \beta} = \tilde{\chi}{}_{\beta \alpha} =
\frac{B^{\tau}D_{\tau}}{(b^2+B^{\rho}B_{\rho})}
\delta _{\alpha \beta}  \; .
\end{equation}
By (\ref{eq:transM}), the constitutive matrix takes the form
\begin{equation}\label{eq:MBI3}
  \tilde{\mathbf{M}} \,
  \; = \;
  \begin{pmatrix} 
    \: \mathbf{1} \: & \:  \frac{B^{\rho}D_{\rho}}{b^2+B^{\tau}B_{\tau}} \mathbf{1}    \:
    \\[0.1cm]
    \mathbf{0} & \mathbf{1}
  \end{pmatrix}
  ^T  
  \;
  \begin{pmatrix} 
    \: \mathbf{1} \: 
    &  \mathbf{0} 
    \\[0.1cm]
    \mathbf{0} & \: \frac{W^2}{(b^2+B^{\lambda}B_{\lambda})^2} \mathbf{1} \:
  \end{pmatrix}
  \begin{pmatrix} 
    \: \mathbf{1} \: & \:  \frac{B^{\rho}D_{\rho}}{b^2+B^{\tau}B_{\tau}} \mathbf{1}    \:
    \\[0.1cm]
    \mathbf{0} & \mathbf{1}
  \end{pmatrix}
  \; .
\end{equation}
As this matrix is obviously positive definite, Proposition 
\ref{prop:posdef} proves that, in the twiddled coordinates, the evolution equations 
are symmetric hyperbolic, so the initial-value problem is, indeed, well-posed.

Note that the coordinate transformation was necessary for achieving our goal. The 
Galilean boost with $v^{\sigma}$ had the effect of killing the antisymmetric part 
of $\boldsymbol{\gamma} = \boldsymbol{\chi} ^T$. In contrast to $\tilde{\mathbf{M}}$, 
the original constitutive matrix $\mathbf{M}$ was not positive definite for all values
of $(\vec{D},\vec{B})$ but only for $(\vec{D},\vec{B})$ in a certain neighborhood of
the origin in $\mathbb{R} ^6$.

Finally, we want to calculate the Born-Infeld light cones. 
In the twiddled coordinates, the characteristic equation (\ref{eq:char2}) takes the form
\begin{equation}\label{eq:charBI}
0 \, = \, \mathrm{det} \Big( \, \tilde{p}{}_0^2 \mathbf{1} + 
\frac{W^2}{(b^2+B^{\mu}B_{\nu})^2}
\tilde{p}{}_{\rho} \tilde{p}{}_{\sigma} \mathbf{A} ^{\rho} \mathbf{A} ^{\sigma} \, \Big)
\, = \,
\tilde{p}{}_0^2 \, 
\big( \, \tilde{p}{}_0^2- 
\frac{W^2 ( \tilde{p}{}_1^2 + \tilde{p}{}_2^2 + \tilde{p}{}_3^2 )}{(b^2+B^{\mu}B_{\mu})^2} 
 \, \big)^2
\; .
\end{equation}
Thus, there is no birefringence (cf. Boillat \cite{Boillat1970}); the Born-Infeld theory 
determines a unique past and a unique future light cone, given by the equation
\begin{equation}\label{eq:BIcone1}
\tilde{G}{}^{ab} \tilde{p}{}_a \tilde{p}{}_b \, = \,
- \tilde{p}{}_0^2 + 
\frac{W^2 ( \tilde{p}{}_1^2 + \tilde{p}{}_2^2 + \tilde{p}{}_3^2)}{(b^2+B^{\mu}B_{\mu})^2} 
 \, = \, 0
 \; .
\end{equation}
In the original coordinates, in which the light cone of the spacetime metric takes the 
form $p_0^2 = p_1^2 + p_2^2 + p_3^2$, (\ref{eq:BIcone1}) reads
\begin{equation}\label{eq:BIcone2}
G^{ab} p_a p_b \, = \,
- \frac{ (p_0 - v^{\rho} p_{\rho})^2}{\kappa _1 \kappa _2 \kappa _3} \, + \, 
 \frac{W^2}{(b^2+B^{\mu}B_{\mu})^2} 
\, \Big( \, 
\frac{p_1^2}{\kappa _1} + \frac{p_2^2}{\kappa _2} + \frac{p_3^2}{\kappa _3}
\, \Big) 
\, = \, 0
\; .
\end{equation}
With the $v^{\rho}$ from (\ref{eq:BIgal}) and the $\kappa _{\mu}$ from
(\ref{eq:kappaev}) the equation for the Born-Infeld light cones becomes
\begin{equation}
\frac{(W \, p_0 - \epsilon^{\rho \mu \nu} p_{\rho}B_{\mu}D_{\nu})^2}{
b^2(b^2+B^{\lambda}B_{\lambda})} \, = \, 
b^2 (p_1^2+p_2^2+p_3^2) \, 
- \frac{D^{\tau}D_{\tau}+B^{\kappa}B_{\kappa}}{2} 
(p_1^2+p_2^2) \, - \, 
\sqrt{\frac{(B^{\rho}B_{\rho}-D^{\sigma}D_{\sigma})^2}{4}
+ (D^{\tau}B_{\tau})^2 \;} \: (p_1^2-p_2^2) \; \Big) \; .
\end{equation}
 
%-------------------------------------------------------------------------------------
\section*{Conclusions}
In this article we have considered Maxwell's equations with a local 
constitutive law and we have found some useful results. In particular,
we have derived several versions of the characteristic equation and 
we have worked out a method of how to calculate its roots; 
moreover, we have conveniently characterised the class of all 
constitutive laws that give symmetric hyperbolic evolution equations.
However, symmetric hyperbolicity is not necessary
for well-posedness of the initial-value problem. If we want to
characterise the class of all constitutive laws for which the initial-value
problem is well-posed, we need a criterion for strong hyperbolicity.
This is an open problem. It would also be desirable to characterise
all constitutive laws that give hyperbolic evolution equations. Again, 
this is an open problem. 
We were able to characterise the
light cones in the case of invariance under temporal or spatial 
inversions and in the case of birefringence; however, we could
not find a condition on the constitutive matrix that is necessary 
and sufficient for either of these two properties. We have found a 
certain group of transformations that act on the set of all constitutive 
matrices and leave the characteristic equation invariant; however, 
we could not determine the set of all such transformations. So there
are a lot of open problems that should be addressed in future work.
\\[0.2cm]
Note added in proof: After this paper was submitted the author learned about
Schuller, Witte and Wohlfarth \cite{SchullerWitteWohlfarth2010},
R{\"a}tzel, Rivera and Schuller \cite{RatzelRiveraSchuller2011}, and
Favaro and Bergamin \cite{FavaroBergamin2011}
where important related results were found.

%---------------------------------------------------------------------------------------

\section*{Acknowledgment}
I have profited very much from discussions with Friedrich Hehl,
Yuri Obukhov and Yakov Itin on the pre-metric approach to
electrodynamics, and from seminars on the subject with
Robin Tucker, David Burton, Jonathan Gratus and other
colleagues in Lancaster.

%-----------------------------------------------------------

%------------------------------------------------------------------------------------------------


\begin{thebibliography}{29}
\expandafter\ifx\csname natexlab\endcsname\relax\def\natexlab#1{#1}\fi
\expandafter\ifx\csname bibnamefont\endcsname\relax
  \def\bibnamefont#1{#1}\fi
\expandafter\ifx\csname bibfnamefont\endcsname\relax
  \def\bibfnamefont#1{#1}\fi
\expandafter\ifx\csname citenamefont\endcsname\relax
  \def\citenamefont#1{#1}\fi
\expandafter\ifx\csname url\endcsname\relax
  \def\url#1{\texttt{#1}}\fi
\expandafter\ifx\csname urlprefix\endcsname\relax\def\urlprefix{URL }\fi
\providecommand{\bibinfo}[2]{#2}
\providecommand{\eprint}[2][]{\url{#2}}

\bibitem[{\citenamefont{Kottler}(1922)}]{Kottler1922}
\bibinfo{author}{\bibfnamefont{F.}~\bibnamefont{Kottler}},
  \bibinfo{journal}{Sitzungsber. {A}kad. {W}ien {II}a}
  \textbf{\bibinfo{volume}{131}}, \bibinfo{pages}{119} (\bibinfo{year}{1922}).

\bibitem[{\citenamefont{Cartan}(1986)}]{Cartan1986}
\bibinfo{author}{\bibfnamefont{{\'E}.}~\bibnamefont{Cartan}},
  \emph{\bibinfo{title}{On manifolds with an affine connection and the theory
  of general relativity}} (\bibinfo{publisher}{Bibliopolis},
  \bibinfo{address}{Napoli}, \bibinfo{year}{1986}), \bibinfo{edition}{english
  translation of the french original from 1923/24}

\bibitem[{\citenamefont{Dantzig}(1934)}]{Dantzig1934}
\bibinfo{author}{\bibfnamefont{D.~v.} \bibnamefont{Dantzig}},
  \bibinfo{journal}{Proc. Cambridge Philos. Soc.}
  \textbf{\bibinfo{volume}{30}}, \bibinfo{pages}{421} (\bibinfo{year}{1934}).

\bibitem[{\citenamefont{Schroedinger}(1950)}]{Schroedinger1950}
\bibinfo{author}{\bibfnamefont{E.}~\bibnamefont{Schroedinger}},
  \emph{\bibinfo{title}{Space-time structure}} (\bibinfo{publisher}{Cambridge
  University Press}, \bibinfo{address}{Cambridge}, \bibinfo{year}{1950}).

\bibitem[{\citenamefont{Post}(1962)}]{Post1962}
\bibinfo{author}{\bibfnamefont{E.~J.} \bibnamefont{Post}},
  \emph{\bibinfo{title}{Formal structure of electromagnetics}}
  (\bibinfo{publisher}{North-Holland}, \bibinfo{address}{Amsterdam},
  \bibinfo{year}{1962}).

\bibitem[{\citenamefont{Hehl and Obukhov}(2003)}]{HehlObukhov2003}
\bibinfo{author}{\bibfnamefont{F.~W.} \bibnamefont{Hehl}} \bibnamefont{and}
  \bibinfo{author}{\bibfnamefont{Y.}~\bibnamefont{Obukhov}},
  \emph{\bibinfo{title}{Foundations of classical electrodynamics}}
  (\bibinfo{publisher}{Birkh{\"a}user}, \bibinfo{address}{Basel},
  \bibinfo{year}{2003}).

\bibitem[{\citenamefont{Frittelli et~al.}(1995)\citenamefont{Frittelli,
  Kozameh, and Newman}}]{FrittelliKozamehNewman1995}
\bibinfo{author}{\bibfnamefont{S.}~\bibnamefont{Frittelli}},
  \bibinfo{author}{\bibfnamefont{C.~N.} \bibnamefont{Kozameh}},
  \bibnamefont{and} \bibinfo{author}{\bibfnamefont{E.~T.}
  \bibnamefont{Newman}}, \bibinfo{journal}{J. Math. Phys.}
  \textbf{\bibinfo{volume}{36}}, \bibinfo{pages}{4984} (\bibinfo{year}{1995}).

\bibitem[{\citenamefont{Rham}(1984)}]{Rham1984}
\bibinfo{author}{\bibfnamefont{G.~d.} \bibnamefont{Rham}},
  \emph{\bibinfo{title}{Differentiable manifolds}}
  (\bibinfo{publisher}{Springer}, \bibinfo{address}{Berlin},
  \bibinfo{year}{1984}).

\bibitem[{\citenamefont{Kong}(1974)}]{Kong1974}
\bibinfo{author}{\bibfnamefont{J.~A.} \bibnamefont{Kong}}, \bibinfo{journal}{J.
  Opt. Soc. Amer.} \textbf{\bibinfo{volume}{64}}, \bibinfo{pages}{1304}
  (\bibinfo{year}{1974}).

\bibitem[{\citenamefont{Kong}(1975)}]{Kong1975}
\bibinfo{author}{\bibfnamefont{J.~A.} \bibnamefont{Kong}},
  \emph{\bibinfo{title}{Theory of electromagnetic waves}}
  (\bibinfo{publisher}{John Wiley}, \bibinfo{address}{New York},
  \bibinfo{year}{1975}).

\bibitem[{\citenamefont{O'Dell}(1970)}]{Odell1970}
\bibinfo{author}{\bibfnamefont{T.~H.} \bibnamefont{O'Dell}},
  \emph{\bibinfo{title}{The electrodynamics of magneto-electric media}}
  (\bibinfo{publisher}{North-Holland}, \bibinfo{address}{Amsterdam},
  \bibinfo{year}{1970}).

\bibitem[{\citenamefont{Hehl and Obukhov}(2005)}]{HehlObukhov2005}
\bibinfo{author}{\bibfnamefont{F.~W.} \bibnamefont{Hehl}} \bibnamefont{and}
  \bibinfo{author}{\bibfnamefont{Y.}~\bibnamefont{Obukhov}},
  \bibinfo{journal}{Phys. Lett.} \textbf{\bibinfo{volume}{A334}},
  \bibinfo{pages}{249} (\bibinfo{year}{2005}).

\bibitem[{\citenamefont{Kline and Kay}(1965)}]{KlineKay1965}
\bibinfo{author}{\bibfnamefont{M.}~\bibnamefont{Kline}} \bibnamefont{and}
  \bibinfo{author}{\bibfnamefont{I.~W.} \bibnamefont{Kay}},
  \emph{\bibinfo{title}{Electromagnetic theory and geometrical optics}}
  (\bibinfo{publisher}{Interscience}, \bibinfo{address}{New York},
  \bibinfo{year}{1965}).

\bibitem[{\citenamefont{Itin}(2009)}]{Itin2009}
\bibinfo{author}{\bibfnamefont{Y.}~\bibnamefont{Itin}}, \bibinfo{journal}{J.
  Phys. A} \textbf{\bibinfo{volume}{42}}, \bibinfo{pages}{475402}
  (\bibinfo{year}{2009}).

\bibitem[{\citenamefont{Born and Wolf}(2002)}]{BornWolf2002}
\bibinfo{author}{\bibfnamefont{M.}~\bibnamefont{Born}} \bibnamefont{and}
  \bibinfo{author}{\bibfnamefont{E.}~\bibnamefont{Wolf}},
  \emph{\bibinfo{title}{Principles of optics}} (\bibinfo{publisher}{Cambridge
  University Press}, \bibinfo{address}{Cambridge}, \bibinfo{year}{2002}),
  \bibinfo{edition}{7th} ed.

\bibitem[{\citenamefont{G{\aa}rding}(1959)}]{Garding1959}
\bibinfo{author}{\bibfnamefont{L.}~\bibnamefont{G{\aa}rding}},
  \bibinfo{journal}{J. Math. Mech.} \textbf{\bibinfo{volume}{8}},
  \bibinfo{pages}{957} (\bibinfo{year}{1959}).

\bibitem[{\citenamefont{Taylor}(1991)}]{Taylor1991}
\bibinfo{author}{\bibfnamefont{M.~E.} \bibnamefont{Taylor}},
  \emph{\bibinfo{title}{Pseudodifferential operators and nonlinear {PDE}}}
  (\bibinfo{publisher}{Birkh{\"a}user}, \bibinfo{address}{Boston},
  \bibinfo{year}{1991}).

\bibitem[{\citenamefont{Lindell et~al.}(1995)\citenamefont{Lindell, Sihvola,
  and Suchy}}]{LindellSihvolaSuchy1995}
\bibinfo{author}{\bibfnamefont{I.~V.} \bibnamefont{Lindell}},
  \bibinfo{author}{\bibfnamefont{A.~H.} \bibnamefont{Sihvola}},
  \bibnamefont{and} \bibinfo{author}{\bibfnamefont{K.}~\bibnamefont{Suchy}},
  \bibinfo{journal}{J. Electromagn. Waves Appl.} \textbf{\bibinfo{volume}{9}},
  \bibinfo{pages}{887} (\bibinfo{year}{1995}).

\bibitem[{\citenamefont{Sylvester}(2000)}]{Sylvester2000}
\bibinfo{author}{\bibfnamefont{J.~R.} \bibnamefont{Sylvester}},
  \bibinfo{journal}{Maths. Gazette} \textbf{\bibinfo{volume}{84}},
  \bibinfo{pages}{460} (\bibinfo{year}{2000}).

\bibitem[{\citenamefont{Graglia et~al.}(1991)\citenamefont{Graglia, Uslenghi,
  and Zich}}]{GragliaUslenghiZich1991}
\bibinfo{author}{\bibfnamefont{R.~D.} \bibnamefont{Graglia}},
  \bibinfo{author}{\bibfnamefont{P.~L.~E.} \bibnamefont{Uslenghi}},
  \bibnamefont{and} \bibinfo{author}{\bibfnamefont{R.~E.} \bibnamefont{Zich}},
  \bibinfo{journal}{IEEE Transactions on Antennas and Propagation}
  \textbf{\bibinfo{volume}{39}}, \bibinfo{pages}{83} (\bibinfo{year}{1991}).

\bibitem[{\citenamefont{Damaskos et~al.}(1982)\citenamefont{Damaskos, Maffett,
  and Uslenghi}}]{DamaskosMaffettUslenghi1982}
\bibinfo{author}{\bibfnamefont{N.~J.} \bibnamefont{Damaskos}},
  \bibinfo{author}{\bibfnamefont{A.~L.} \bibnamefont{Maffett}},
  \bibnamefont{and} \bibinfo{author}{\bibfnamefont{P.~L.~E.}
  \bibnamefont{Uslenghi}}, \bibinfo{journal}{IEEE Transactions on Antennas and
  Propagation} \textbf{\bibinfo{volume}{30}}, \bibinfo{pages}{991}
  (\bibinfo{year}{1982}).

\bibitem[{\citenamefont{Itin}(2010)}]{Itin2010}
\bibinfo{author}{\bibfnamefont{Y.}~\bibnamefont{Itin}}, \bibinfo{journal}{Phys.
  Lett. A} \textbf{\bibinfo{volume}{374}}, \bibinfo{pages}{1113}
  (\bibinfo{year}{2010}).

\bibitem[{\citenamefont{Perlick}(2006)}]{Perlick2006}
\bibinfo{author}{\bibfnamefont{V.}~\bibnamefont{Perlick}},
  \bibinfo{journal}{Gen. Relativ. Gravit.} \textbf{\bibinfo{volume}{38}},
  \bibinfo{pages}{365} (\bibinfo{year}{2006}).

\bibitem[{\citenamefont{Hehl and L{\"a}mmerzahl}(2004)}]{HehlLaemmerzahl2004}
\bibinfo{author}{\bibfnamefont{F.~W.} \bibnamefont{Hehl}} \bibnamefont{and}
  \bibinfo{author}{\bibfnamefont{C.}~\bibnamefont{L{\"a}mmerzahl}},
  \bibinfo{journal}{Phys. Rev.} \textbf{\bibinfo{volume}{D 70}},
  \bibinfo{pages}{105022} (\bibinfo{year}{2004}).

\bibitem[{\citenamefont{Itin}(2005)}]{Itin2005}
\bibinfo{author}{\bibfnamefont{Y.}~\bibnamefont{Itin}}, \bibinfo{journal}{Phys.
  Rev.} \textbf{\bibinfo{volume}{D 72}}, \bibinfo{pages}{087502}
  (\bibinfo{year}{2005}).

\bibitem[{\citenamefont{Perlick}(2000)}]{Perlick2000b}
\bibinfo{author}{\bibfnamefont{V.}~\bibnamefont{Perlick}},
  \emph{\bibinfo{title}{Ray Optics, {F}ermat's Principle, and Applications to
  General Relativity}}, vol. \bibinfo{volume}{m61} of
  \emph{\bibinfo{series}{Lecture Notes in Physics. Monographs}}
  (\bibinfo{publisher}{Springer}, \bibinfo{address}{Berlin, Germany; New York,
  U.S.A.}, \bibinfo{year}{2000}).

\bibitem[{\citenamefont{Born and Infeld}(1934)}]{BornInfeld1934}
\bibinfo{author}{\bibfnamefont{M.}~\bibnamefont{Born}} \bibnamefont{and}
  \bibinfo{author}{\bibfnamefont{L.}~\bibnamefont{Infeld}},
  \bibinfo{journal}{Proc. Roy. Soc. London} \textbf{\bibinfo{volume}{A 144}},
  \bibinfo{pages}{425} (\bibinfo{year}{1934}).

\bibitem[{\citenamefont{Boillat}(1970)}]{Boillat1970}
\bibinfo{author}{\bibfnamefont{G.}~\bibnamefont{Boillat}}, \bibinfo{journal}{J.
  Math. Phys.} \textbf{\bibinfo{volume}{11}}, \bibinfo{pages}{941}
  (\bibinfo{year}{1970}).

\bibitem[{\citenamefont{Serre}(2004)}]{Serre2004}
\bibinfo{author}{\bibfnamefont{D.}~\bibnamefont{Serre}},
  \bibinfo{journal}{Arch. Rat. Mech. Anal.} \textbf{\bibinfo{volume}{172}},
  \bibinfo{pages}{309} (\bibinfo{year}{2004}).

\bibitem[{\citenamefont{Schuller et~al.}(2010)\citenamefont{Schuller, Witte,
  and Wohlfarth}}]{SchullerWitteWohlfarth2010}
\bibinfo{author}{\bibfnamefont{F.~P.} \bibnamefont{Schuller}},
  \bibinfo{author}{\bibfnamefont{C.} \bibnamefont{Witte}},
  \bibnamefont{and} \bibinfo{author}{\bibfnamefont{M.~N.~R.}
  \bibnamefont{Wohlfarth}}, \bibinfo{journal}{Ann. Phys. (NY)} 
  \textbf{\bibinfo{volume}{325}}, \bibinfo{pages}{1853}
  (\bibinfo{year}{2010}).

\bibitem[{\citenamefont{R{\"a}tzel et~al.}(2010)\citenamefont{R{\"a}tzel, Rivera,
  and Schuller}}]{RatzelRiveraSchuller2011}
\bibinfo{author}{\bibfnamefont{D.} \bibnamefont{R{\"a}tzel}},
  \bibinfo{author}{\bibfnamefont{S.} \bibnamefont{Rivera}},
  \bibnamefont{and} \bibinfo{author}{\bibfnamefont{F.~P.}
  \bibnamefont{Schuller}}, \bibinfo{journal}{Phys. Rev.} 
  \textbf{\bibinfo{volume}{D 83}}, \bibinfo{pages}{044047}
  (\bibinfo{year}{2011}).

\bibitem[{\citenamefont{Favaro and Bergamin}(2011)}]{FavaroBergamin2011}
\bibinfo{author}{\bibfnamefont{A.}~\bibnamefont{Favaro}} \bibnamefont{and}
  \bibinfo{author}{\bibfnamefont{L.}~\bibnamefont{Bergamin}},
  \bibinfo{journal}{Ann. Phys. (Berlin)} \textbf{\bibinfo{volume}{523}},
  \bibinfo{pages}{to appear} (\bibinfo{year}{2011}).

\end{thebibliography}
\end{document}